\newcommand{\rmd}{\mathrm{d}}
\newtheorem{theorem}{Theorem}
\newtheorem{proposition}{Proposition}
\newtheorem{definition}{Definition}
\newtheorem{lemma}{Lemma}
\theoremstyle{remark}
\newtheorem{remark}{Remark}
\newtheorem{assumption}{Assumption}
\title{\LARGE \bf
A stochastic Gordon-Loeb model for optimal cybersecurity investment under clustered attacks
}
\author[1]{Giorgia Callegaro} 
\author[1]{Claudio Fontana\thanks{\underline{Corresponding author}: Claudio Fontana, Department of Mathematics ``Tullio Levi-Civita'', University of Padova, via Trieste 63, 35121 Padova, Italy. E-mail: fontana@math.unipd.it. Phone: +39-049-8271358.}}
\author[2]{Caroline Hillairet}
\author[3]{Beatrice Ongarato}
\affil[1]{Department of Mathematics ``Tullio Levi-Civita'', University of Padova, Padova, 35121, Italy.}
\affil[2]{CREST, ENSAE Paris, Palaiseau, 91120, France}
\affil[3]{Institute of Mathematical Stochastics, TU Dresden, Dresden, 01169, Germany}
\begin{document}

\maketitle
\thispagestyle{plain}
\pagestyle{plain}

\vspace{1em}

\begin{abstract}
We develop a continuous-time stochastic model for optimal cybersecurity investment under the threat of cyberattacks. The arrival of attacks is modeled using a Hawkes process, capturing the empirically relevant feature of clustering in cyberattacks. Extending the Gordon-Loeb model, each attack may result in a breach, with breach probability depending on the system's vulnerability. We aim at determining the optimal cybersecurity investment to reduce vulnerability. The problem is cast as a two-dimensional Markovian stochastic optimal control problem and solved using dynamic programming methods. Numerical results illustrate how accounting for attack clustering leads to more responsive and effective investment policies, offering significant improvements over static and Poisson-based benchmark strategies. Our findings underscore the value of incorporating realistic threat dynamics into cybersecurity risk management.
\end{abstract}

\begin{IEEEkeywords}
Cyber-risk; security breach; benefit-cost analysis; Hawkes process; stochastic optimal control.
\end{IEEEkeywords}

\section{Introduction} \label{Introduction}

Cyber-risk is nowadays widely acknowledged as one of the major sources of operational risk for organizations worldwide. The 2024 ENISA Threat Landscape Report \cite{ENISA} documents ``a notable escalation in cybersecurity attacks, setting new benchmarks in both the variety and number of incidents, as well as their consequences''. According to the AON 9th Global Risk Management Survey\footnote{\scriptsize{Source: \url{https://www.aon.com/en/insights/reports/global-risk-management-survey}.}}, cyberattacks and data breaches represent the foremost source of global risk faced by organizations, with the second biggest risk being business interruption, which is itself often a consequence of cyber-incidents.
A recent poll on Risk.net confirms information security and IT disruption as the top two sources of operational risk for 2025.\footnote{\scriptsize{Source: \url{https://www.risk.net/risk-management/7961268/top-10-operational-risks-for-2025}.}}
According to IBM, the global average cost of a data breach has reached nearly 5M USD in 2024, an increase of more than 10\%  over the previous year.\footnote{\scriptsize{Source: \url{https://www.ibm.com/reports/data-breach}.}}

The rapid and widespread emergence of cyber-risk as a key source of operational risk has led to a significant increase in cybersecurity spending. In the 2025 ICS/OT cybersecurity budget survey of the SANS Institute \cite{SANS}, 55\% of the respondents reported a substantial rise in cybersecurity budgets over the previous two years. This trend underscores the importance of adopting effective cybersecurity investment policies that balance risk mitigation with cost efficiency. 

The problem of optimal cybersecurity investment has been first addressed in the seminal work of Gordon and Loeb \cite{gordon2002economics}. In their model, reviewed in Section \ref{The Gordon-Loeb model} below, the decision maker can reduce the vulnerability to cyberattacks by investing in cybersecurity. The optimal expenditure in cybersecurity is determined by maximizing the expected net benefit of reducing the breach probability.
The Gordon-Loeb model laid the foundations for a rigorous quantitative analysis of cybersecurity investments and has been the subject of numerous extensions and generalizations: we mention here only some studies that are closely related to our context, referring to \cite{FedeleRoner2022} for a comprehensive overview.
The key ingredient of the Gordon-Loeb model is represented by the security breach probability function (see Section \ref{The Gordon-Loeb model}), which has been further analyzed in \cite{huang2013economics} and \cite{MazzoccoliNaldi2022}. The risk-neutral assumption of the original model \cite{gordon2002economics} has been relaxed to accommodate risk-averse preferences in \cite{miaoui2019enterprise}. 
The practical applicability of the Gordon-Loeb model for guiding cost-effective cybersecurity investment decisions has been examined in \cite{Gordon_et_al20} within the National Institute of Standards and Technology (NIST) cybersecurity framework.

The original Gordon-Loeb model is a static model and, therefore, does not allow to address the crucial issue of the optimal timing of investment decisions. Adopting a real-options approach, \cite{Gordon_et_al2003} and \cite{tatsumi2010optimal} have proposed dynamic versions of the model that allow analyzing the optimal timing and level of cybersecurity investment.
Closer to our setup, a dynamic extension of the Gordon-Loeb model has been developed in \cite{krutilla2021benefits}, considering the problem of optimal cybersecurity investment over an infinite time horizon and assuming that cybersecurity assets are subject to depreciation over time, while future net benefits of cybersecurity investment are discounted.

An effective cybersecurity investment policy must be adaptive and evolve in response to changing threat environments. As noted by \cite{ZellerScherer2022}, a key feature of cyber-risk is its dynamic nature, due to the rapid technological transformation and the evolution of threat actors. Similarly, \cite{BalzanoMarzi2025} emphasize the need for adaptable and responsive cybersecurity policies in order to face the challenge of dynamic cyberattacks. The framework of \cite{krutilla2021benefits} is based on a deterministic model and, therefore, cannot capture the dynamic behavior of cyber-risk.
Addressing this need, the main contribution of this work consists in proposing a modeling framework for optimal cybersecurity investment in a dynamic stochastic setup, allowing for investment policies which respond in real time to randomly occurring cyberattacks.
Our work contributes both to cyber-risk modeling and to cyber-risk management, as categorized in the recent survey by \cite{He_et_al2024}. Moreover, our stochastic modeling framework takes into account the empirically relevant feature of temporally clustered cyberattacks, as discussed in the next subsection.

\subsection{Modeling Cyberattacks with Hawkes Processes} 
\label{Hawkes processes and cyber-risk}

A distinctive feature of our modeling framework, which will be described in Section \ref{A dynamic extension}, is the use of a Hawkes process to model the arrival of cyberattacks. First introduced by Alan G. Hawkes in \cite{hawkes1971spectra}, these stochastic processes are used to model event arrivals over time and are particularly suited to situations where the occurrence of one event increases the likelihood of subsequent events (self-excitation), thereby generating temporally clustered events.
In our context, we denote by $N_t$ the cumulative number of cyberattacks up to time $t$, modeled as a Hawkes process, and by $\lambda_t$ the associated stochastic intensity (hazard rate), representing the instantaneous likelihood of an attack occurring.

The self-exciting property of the Hawkes process $(N_t)_{t\geq0}$ is captured by the specification of its stochastic intensity:
\begin{align}
\lambda_t &= \alpha +(\lambda_0-\alpha)e^{-\xi t} + \beta\sum_{i=1}^{N_t} e^{-\xi(t-\tau_i)},
\label{intensity integral form}
\end{align}
for all $t\geq0$, where
\begin{itemize}
    \item $\alpha>0$ is the long-term mean intensity;
    \item $\lambda_0>0$ is the intensity at the initial time $t = 0$;
    \item $\xi>0$ is the exponential decay rate;
    \item $\beta>0$ determines the magnitude of self-excitation;
    \item $(\tau_i)_{i\in\mathbb{N}^*}$ are the random times at which attacks occur.
\end{itemize}

Figure \ref{fig: hawkes traj} shows a simulated trajectory of $N$ and $\lambda$, showing the clustering behavior induced by the self-exciting mechanism described above. 
General presentations of the theory and the applications of Hawkes processes can be found in \cite{book_Hawkes,Lima}.

\begin{figure}[h!]
\centering
\includegraphics[width=1\linewidth]{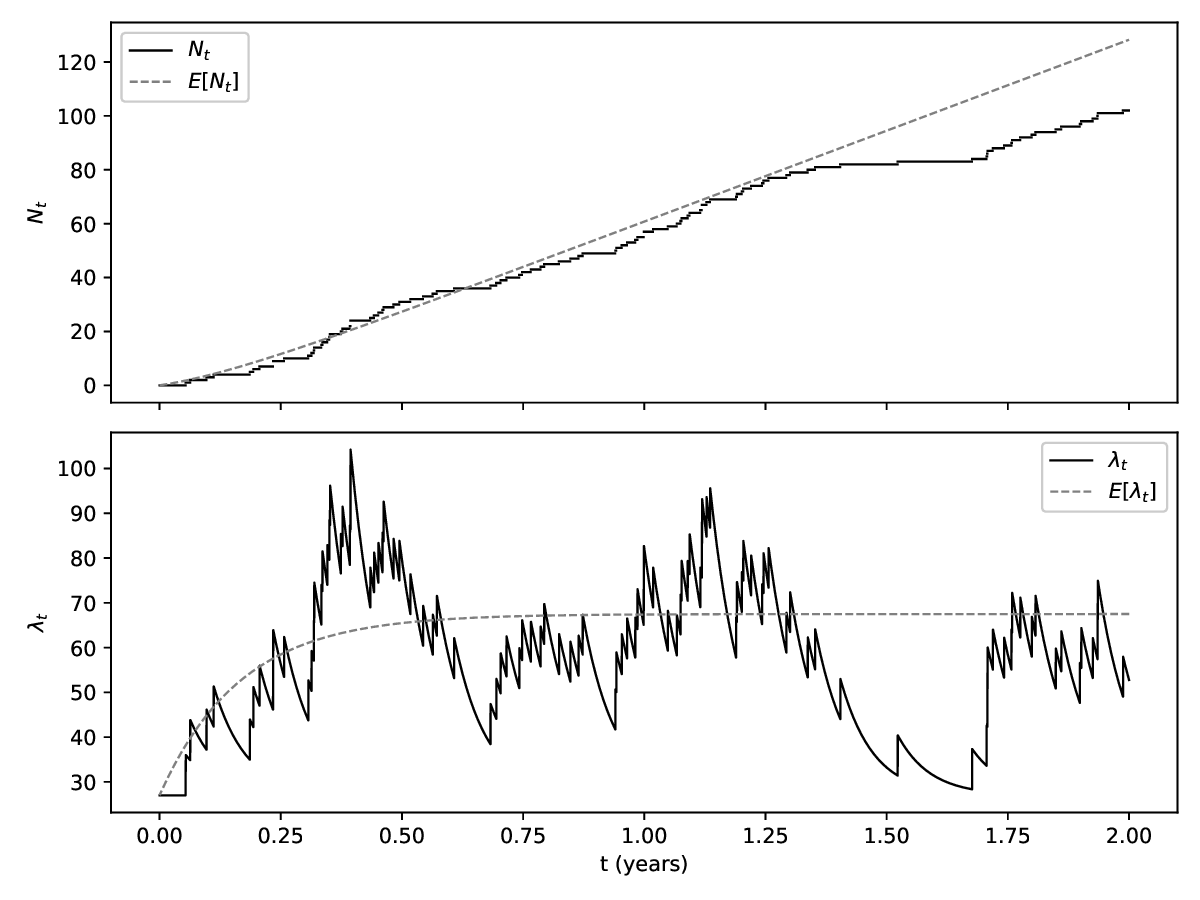}
\caption{One simulated trajectory of $N$ (top) and $\lambda$ (bottom), for $\alpha=27$, $\lambda_0=27$, $\xi=15$, $\beta=9$.}
    \label{fig: hawkes traj}
\end{figure}

This modeling choice is particularly relevant in the context of cyber-risk. Cyberattacks frequently occur in bursts, for instance following the discovery of a vulnerability or due to the propagation of malware across interconnected systems (see \cite{Nguyen_et_al2024}). Such clustered patterns are not adequately captured by memoryless models, such as those based on Poisson processes.

Empirical evidence supports the appropriateness of the Hawkes framework for modeling cyber-risk. A contagious behavior in cyberattacks has been documented in \cite{baldwin2017contagion}, analyzing the threats to key internet services using data from the SANS Institute. Using data from from the Privacy Rights Clearinghouse, it has been empirically demonstrated in \cite{bessy2021multivariate} that Hawkes-based models provide a more realistic representation of the interdependence of data breaches compared to Poisson-based models. The recent work \cite{boumezoued2023cyber} reinforces this perspective by calibrating a two-phase Hawkes model to cyberattack data taking into
account publication of cyber-vulnerabilities.
These studies provide strong support for modeling cyberattacks via Hawkes processes, as described in more detail in Section \ref{A dynamic extension}.

\subsection{Goals and Main Contributions}
In this work, we address the challenge of optimal cybersecurity investment under temporally clustered cyberattacks, in line with the empirical evidence reported above. In particular, we aim at studying the adaptive investment policy that best responds in real time to the random arrival of cyberattacks, within a framework that balances realism with analytical tractability.
To this end, we develop a continuous-time stochastic extension of the classical Gordon-Loeb model, describing attack arrivals with a Hawkes process. The model incorporates key operational features such as technological obsolescence and the decreasing marginal effectiveness of large investments. The resulting optimization problem is cast as a stochastic optimal control problem and solved via dynamic programming methods. We develop efficient numerical schemes to compute the optimal policy and we quantify the benefits of dynamic investment strategies under clustered attacks.
By integrating risk dynamics into the cybersecurity investment problem, our framework provides new insights into how organizations can better allocate resources to mitigate cyber-risk.

\subsection{Structure of the Paper}
In Section \ref{Model}, we recall the original Gordon-Loeb model and introduce our continuous-time stochastic extension. In Section \ref{The optimization problem}, we formulate the cybersecurity investment problem and characterize the optimal policy. Section \ref{Methods} details the model parameters and the numerical methods used in our analysis. Section \ref{Numerical results} presents the results of our numerical analysis and discusses their practical implications for cyber-risk management. 
Section \ref{InsurancePart} discusses the relevance of our study from an insurance perspective, also in relation to the calculation of premia for cyber-insurance contracts.
Section \ref{Conclusions} concludes.

\section{The Model} \label{Model}

We study the decision problem faced by an entity (a public administration or a large corporation) that is threatened by a massive number of randomly occurring cyberattacks with a temporally clustered pattern. As in the Gordon-Loeb model (reviewed in Section \ref{The Gordon-Loeb model}), not all cyberattacks result in successful breaches of the entity's  system. The success rate of each attack depends on the system's vulnerability, which the entity can mitigate by investing in cybersecurity.

\subsection{The Gordon-Loeb Model} \label{The Gordon-Loeb model}

The Gordon-Loeb model, introduced in 2002 in the seminal work \cite{gordon2002economics}, provides a static framework for determining the optimal investment in cybersecurity to protect a given information set under the threat of cyberattacks. In our context, the information set corresponds to the entity's IT infrastructure.
In the Gordon-Loeb model, the information set is characterized by three key parameters, all assumed to be constant:
\begin{itemize}
\item $p \ {\in [0,1]}$: the probability that a cyberattack occurs;
\item $v \ {\in [0,1]}$: the baseline probability that a cyberattack successfully breaches the information set (vulnerability);
\item $\ell \ {\ge 0}$: the loss incurred when a breach occurs.
\end{itemize}

The vulnerability $v$ refers to a baseline level of security, reflecting the cybersecurity measures already in place before any new decision is taken. Without any further investment in cybersecurity, the expected loss is $vp\ell$. 
To mitigate its vulnerability, the entity may invest an additional amount $z \geq 0$ in cybersecurity. The variable $z$ therefore represents incremental cybersecurity investment beyond the existing baseline protection. In particular, $z = 0$ means that no further investment is made, but the system still retains the prior protection embodied in $v$.
Throughout the paper, absence of cybersecurity investment will refer to the case $z = 0$.
The effectiveness of the investment $z$ is measured by a security breach probability function $S(z, v)$, which represents the probability that an attack successfully breaches the information set, given investment level $z$ and the vulnerability $v$. The resulting expected loss is thus $S(z, v)p\ell$.
Gordon and Loeb require the function $S$ to satisfy the properties listed in the following assumption.

\begin{assumption} \label{assA}
$ $
\begin{itemize}
    \item[(A1)] $S(z, 0) = 0$, for all $z\geq0$, i.e., an invulnerable information set always remains invulnerable;
    \item[(A2)] $S(0, v) = v$, i.e., in the absence of  investment, the information set retains its baseline vulnerability; 
    \item[(A3)]
    $S$ is decreasing and convex in $z$, so that $S_z(z, v) < 0$ and $S_{zz}(z, v) > 0$, for all $z\geq0$, i.e., cybersecurity investment reduces breach probability with diminishing marginal effectiveness.
\end{itemize}    
\end{assumption}

Gordon and Loeb consider two classes of security breach probability functions, which satisfy Assumption (A): 
\begin{equation}    \label{class of S}
S_I(z, v) = \frac{v}{(az+1)^b}
\qquad\text{and}\qquad
S_{II}(z, v) = v^{a z+1},
\end{equation}
for some parameters $a, b>0$.

In \cite{gordon2002economics}, the optimal investment in cybersecurity is determined by maximizing the Expected Net Benefit of Investment in information Security (ENBIS), defined as follows: 
\begin{equation}\label{ENBIS}
\text{ENBIS}(z) := \bigl(v - S(z, v)\bigr)p \ell - z.
\end{equation}
The ENBIS function quantifies the net trade-off between the benefit (captured by the reduction in the expected loss due to the investment in cybersecurity) and the direct cost of investing.
Under Assumption \ref{assA}, the optimal investment level $z^*$ is determined by the following first-order condition:
\[
-S_z(z^*,v)p \ell -1=0.
\]

\begin{remark}
For both classes of security breach functions in \eqref{class of S}, Gordon and Loeb show that the optimal cybersecurity investment never exceeds $1/e\approx 37\%$ of the expected loss:
\[
z^*< \frac{v p \ell}{e}.
\]    
\end{remark}

\subsection{A Continuous-Time Model Driven by a Hawkes Process} \label{A dynamic extension}

We now introduce a continuous-time model for randomly occurring  cyberattacks. As discussed in Section \ref{Hawkes processes and cyber-risk}, we want to capture the empirically relevant feature of clustering of cyberattacks, while retaining the key elements of the original Gordon-Loeb model reviewed in Section \ref{The Gordon-Loeb model}.

The arrival of cyberattacks is described by a Hawkes process $N=(N_t)_{t\geq0}$, defined on a probability space $(\Omega, \mathcal{F}, \mathbb{P})$, with $N_t$ representing the number of cyberattacks up to time $t$, for all $t\geq0$. As explained in Section \ref{Hawkes processes and cyber-risk}, the process $N$ is characterized by a self-exciting stochastic intensity $(\lambda_t)_{t\geq0}$ solving the following stochastic differential equation:
\begin{equation}    \label{intensity-SDE}
\rmd \lambda_t= \xi(\alpha-\lambda_t)\, \rmd t + \beta\rmd {N}_t, \quad \lambda_0>0,
\end{equation}
whose explicit solution is given by \eqref{intensity integral form}.
In the next proposition (adapted from \cite{dassios2013exact}), we compute the expectation of some basic quantities which will be used later.\footnote{The condition $\beta < \xi$ in Proposition \ref{expectation lambda N} ensures that the $L^1$-norm of the self-excitation kernel of the Hawkes process is strictly less than one. This guarantees that the process is non-explosive, meaning that it generates almost surely a finite number of events over any finite time interval. The same condition also corresponds to the stationarity condition, widely adopted in the theory of Hawkes processes since the seminal work of \cite{HawkesOakes74}.}

\begin{proposition} \label{expectation lambda N}
Let $(N_t)_{t\geq0}$ be a Hawkes process with intensity $(\lambda_t)_{t\geq0}$ given in \eqref{intensity-SDE}, with $\beta<\xi$. Then, for all $t\geq0$,
   \begin{align*}
       \mathbb{E}[\lambda_t]&=\frac{\alpha\xi}{\xi-\beta} + e^{-(\xi-\beta )t}\left( \lambda_0-\frac{\alpha\xi}{\xi-\beta} \right), \\
       \mathbb{E}[N_t]&=\int_0^t \mathbb{E}[\lambda_s] \rmd s \\
       &= \frac{\alpha\xi}{\xi-\beta} t - \frac{1}{\xi-\beta} \left(\lambda_0-\frac{\alpha\xi}{\xi-\beta} \right) \left(e^{-(\xi-\beta)t}-1\right).
   \end{align*} 
\end{proposition}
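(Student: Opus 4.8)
The plan is to exploit the defining SDE \eqref{intensity-SDE} together with the martingale property of the compensated Hawkes process. First I would recall that, by the very definition of the stochastic intensity, the process $N_t - \int_0^t \lambda_s\,\rmd s$ is a (local) martingale started at zero; under the stationarity condition $\beta<\xi$ the process is non-explosive and $\lambda$ is $L^1$-integrable on compact time intervals, so this is a genuine martingale. Taking expectations then yields the basic identity
\[
\mathbb{E}[N_t] = \int_0^t \mathbb{E}[\lambda_s]\,\rmd s,
\]
which is precisely the second displayed equality in the statement, and reduces the whole problem to computing the single function $m(t) := \mathbb{E}[\lambda_t]$.

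Next I would write \eqref{intensity-SDE} in integral form, $\lambda_t = \lambda_0 + \xi\int_0^t(\alpha-\lambda_s)\,\rmd s + \beta N_t$, take expectations on both sides, and apply Fubini (justified by the integrability above) together with the identity just derived. This produces the linear integral equation
\[
m(t) = \lambda_0 + \xi\alpha\, t - (\xi-\beta)\int_0^t m(s)\,\rmd s .
\]
Since the right-hand side is absolutely continuous in $t$, differentiating gives the linear first-order ODE $m'(t) = \xi\alpha - (\xi-\beta)\,m(t)$ with initial condition $m(0)=\lambda_0$.

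This ODE is solved explicitly: the stationary (particular) solution is $\alpha\xi/(\xi-\beta)$, while the homogeneous part decays at rate $\xi-\beta>0$, so that
\[
m(t) = \frac{\alpha\xi}{\xi-\beta} + \left(\lambda_0 - \frac{\alpha\xi}{\xi-\beta}\right)e^{-(\xi-\beta)t},
\]
which is the first claimed formula. Substituting this back into $\mathbb{E}[N_t]=\int_0^t m(s)\,\rmd s$ and evaluating the elementary integral $\int_0^t e^{-(\xi-\beta)s}\,\rmd s = (1-e^{-(\xi-\beta)t})/(\xi-\beta)$ yields the stated expression for $\mathbb{E}[N_t]$ after rearranging the sign.

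The only genuinely delicate point is the justification that the compensated counting process is a \emph{true} martingale (rather than merely a local one) and that expectation and time integral may be interchanged; both rest on the $L^1$-integrability of $\lambda$ on $[0,t]$, which is exactly what the condition $\beta<\xi$ guarantees, as noted in the footnote to the statement. Once this integrability is secured, the remaining steps amount to a routine linear-ODE computation, so I expect no further obstacles.
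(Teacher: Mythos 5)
Your proof is correct. The paper does not spell out a proof of this proposition (it is stated as adapted from the cited Dassios--Zhao reference), and your argument --- compensator martingale identity, Fubini on the integral form of \eqref{intensity-SDE}, and the resulting linear ODE $m'(t)=\xi\alpha-(\xi-\beta)m(t)$ with $m(0)=\lambda_0$ --- is exactly the standard derivation underlying that reference, with the integrability caveat appropriately flagged.
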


We denote by $(\tau_i)_{i\in\mathbb{N}^*}$ the random jump times of the process $N$, representing the arrival times of cyberattacks. 
For each $t\geq0$, we denote by $\mathcal{F}_t:=\sigma(N_s;s\leq t)$ the sigma-field generated by the Hawkes process $N$ up to time $t$, representing the information generated by the history of the attack timings up to time $t$. In the following, this is assumed to be the information set available to the decision maker (see also Remark \ref{rem:info} below).

In the absence of cybersecurity investment, each attack is assumed to breach the entity's IT system with fixed probability $v$ (vulnerability). 
For each $i\in\mathbb{N}^*$, we introduce a Bernoulli random variable $B^v_i$ of parameter $v$, where the event $\{ B^v_i = 1 \}$ corresponds to a successful breach caused by the $i$-th attack. In the event of a breach, the entity incurs a random monetary loss $\eta_i$, realized at the attack time $\tau_i$. Otherwise, if $B^v_i = 0$, the attack is blocked and no loss occurs at time $\tau_i$.

The families of random variables $(B^v_i)_{i\in\mathbb{N}^*}$ and $(\eta_i)_{i\in\mathbb{N}^*}$ are assumed to satisfy the following standing assumption.

\begin{assumption}  \label{assB}
The family $(\eta_i)_{i\in\mathbb{N}^*}$ is composed by i.i.d. positive random variables in $L^1(\mathbb{P})$.
The families $(\eta_i)_{i\in\mathbb{N}^*}$ and $(B^v_i)_{i\in\mathbb{N}^*}$ are mutually independent and independent of $N$.
\end{assumption}

\begin{remark}
The assumption that the losses $(\eta_i)_{i\in\mathbb{N}^*}$ are independent of the attack arrival process $N$ can be restrictive from an empirical viewpoint. In practice, both the frequency and the severity of cyber losses can increase during periods of clustered attacks.
From a modeling standpoint, this assumption can be relaxed, for instance by allowing the loss distribution to depend on the current intensity, making use of the  techniques recently developed in \cite{MSPD}.
However, introducing such dependency would significantly complicate the analysis of the optimization problem  in Section \ref{The optimization problem} and  requires precise information on the conditional distribution of losses given the attack dynamics. In the absence of sufficient publicly available data on cyber losses, we retain the independence assumption as a tractable and transparent modeling choice.
\end{remark}

The cumulative loss incurred over a planning horizon $[0,T]$, in the absence of cybersecurity investment, is given by:
\begin{equation}\label{actual losses without security}
L_T^0 := \sum_{i=1}^{N_T} \eta_i B_i^v.
\end{equation}

In our dynamic model, the entity can react to the evolving threat environment by investing in cybersecurity, in order to mitigate its vulnerability to cyberattacks. Investment occurs continuously throughout the planning horizon $[0,T]$ and is described by a non-negative investment rate process $(z_t)_{t \in [0, T]}$. For each $t<T$, the quantity $z_t$ represents the increase in the level of cybersecurity over the infinitesimal time interval $[t,t+\rmd t]$ and is assumed to be chosen based on the information set $\mathcal{F}_{t-}$ available to the decision maker just before time $t$.

\begin{remark}  \label{rem:info}
In the terminology of stochastic processes, the requirement that the investment rate $z_t$ must be chosen on the basis of $\mathcal{F}_{t-}$ corresponds to requiring the process $(z_t)_{t\in[0,T]}$ to be predictable with respect to the filtration generated by $N$. In particular, at any attack time $\tau_i$, the value $z_{\tau_i}$ is determined by the information $\mathcal F_{\tau_i-}$: the decision maker may adjust investment after observing an attack, but not in a way that anticipates the attack at the same instant.
We point out that, in our setup, the outcomes of previous attacks (i.e., whether an attack has successfully breached the system or not) do not carry any relevant informational content for decision making, as they do not affect the dynamics of future attack arrivals.
\end{remark}

Investment in cybersecurity is subject to rapid technological obsolescence (see, e.g., \cite{HayesBodhani}). In line with \cite{krutilla2021benefits}, we take into account this significant aspect in our model by introducing a depreciation rate $\rho>0$. The cybersecurity level reached at time $t$ is then defined as follows, for all $t\in[0,T]$:
\begin{equation}\label{h eq integral}
    H^z_t=H_0e^{-\rho t}+\int_0^t e^{-\rho(t-s)}z_s \rmd s,
\end{equation}
which equivalently, in differential form, reads as follows: 
\[
\rmd H^z_t= (z_t-\rho H^z_t)\rmd t,
\qquad H^z_0=H_0\geq0.
\]
As in \cite{krutilla2021benefits}, we interpret the cybersecurity level as an aggregated asset, which can be thought of as a combination of technological infrastructures, software, and human expertise.

In our continuous-time framework, we let the breach probability evolve dynamically with the current cybersecurity level. More specifically, suppose that the decision maker adopts an investment policy $z=(z_t)_{t\in[0,T]}$. At each attack time $\tau_i$, a breach is assumed to occur with probability
\begin{equation}    \label{SHtaui}
S(H^z_{\tau_i},v),
\end{equation}
where $H^z_{\tau_i}$ is given by \eqref{h eq integral} evaluated at $t=\tau_i$ and $S$ is a security breach probability function satisfying Assumption \ref{assA}, as in the original Gordon-Loeb model. Hence, the probability that the $i$-th attack successfully breaches the IT system depends on the cybersecurity level $H^z_{\tau_i}$ reached at the attack's time $\tau_i$. In turn, $H^z_{\tau_i}$ is determined by the investment realized over the time period $[0,\tau_i]$, taking into account technological obsolescence. If the $i$-th attack breaches the IT system, then the entity incurs into a loss of $\eta_i$, otherwise the attack is blocked and the entity does not suffer any loss at time $\tau_i$.

\begin{remark}
The proposed model allows for adaptive real-time cybersecurity investment. More specifically, the arrival of an attack triggers an increased likelihood of further attacks within a short timeframe, due to the form \eqref{intensity-SDE} of the intensity. The decision maker can respond in real-time by increasing cybersecurity investment, which in turn reduces future breach probabilities through the function $S$ in \eqref{SHtaui}.
The optimal investment policy will be determined in Section \ref{The optimization problem}, while the practical importance of allowing for an adaptive real-time investment strategy - rather than a static policy as in the original Gordon-Loeb model - will be empirically analyzed in Section \ref{Comparison with original Gordon-Loeb}.
\end{remark}

Analogously to the case without investment in cybersecurity, we can write as follows the cumulative losses $L^z_T$ incurred on the time interval $[0,T]$ when investing in cybersecurity according to a generic rate  $z=(z_t)_{t\in[0,T]}$:
\begin{equation} \label{actual losses with security}
L_T^z := \sum_{i=1}^{N_T} \eta_i B_i^{S(H^z_{\tau_i},v)},
\end{equation}
where $(B_i^{S(H^z_{\tau_i},v)})_{i\in\mathbb{N}^*}$ is a family of random variables taking values in $\{0,1\}$ and satisfying the following assumption.

\begin{assumption}  \label{assC}
For any process $(z_t)_{t\in[0,T]}$, it holds that
\[
\mathbb{P}\Bigl(B_i^{S(H^z_{\tau_i},v)}=1 \Big| \mathcal{F}_T\Bigr) = S(H^z_{\tau_i},v),
\qquad\text{ for all } i\in\mathbb{N}^*,
\]
where $(H^z_t)_{t\in[0,T]}$ is determined by $(z_t)_{t\in[0,T]}$ as in \eqref{h eq integral}. Moreover, for each $i\in\mathbb{N}^*$, the random variables $B_i^{S(H^z_{\tau_i},v)}$ and $\eta_i$ are conditionally independent given  $\mathcal{F}_T$.
\end{assumption}

\begin{remark}\label{point_process}
The cumulative loss process $(L^z_t)_{t\in[0,T]}$ defined as in \eqref{actual losses with security} constitutes a {\em marked Hawkes process}, in the terminology of point processes (see \cite{bremaud}). In our modeling framework, the marks (losses) are endogenous and depend on the dynamically evolving cybersecurity level $(H^z_t)_{t\in[0,T]}$.
\end{remark}

For strategic decision making, a key quantity is represented by the expected losses due to cyberattacks over the time interval $[0,T]$ when adopting a suitable cybersecurity policy. This is the content of the following proposition, which will be fundamental for addressing the optimal investment problem in Section \ref{The optimization problem}. 
We denote by $\bar{\eta}:=\mathbb{E}[\eta_i]$ the expected loss resulting from a successful breach, for all $i\in\mathbb{N}^*$.

\begin{proposition} \label{prop:expectations}
Under Assumptions \ref{assB} and \ref{assC}, it holds that
    \begin{align*}
     \mathbb{E}[L_T^0]&=  \bar{\eta}\,v\,\mathbb{E}\left[\int_0^T  \lambda_t \rmd t\right], \\ 
          \mathbb{E}[L_T^z]&=  \bar{\eta}\,\mathbb{E}\left[\int_0^T  S(H^z_t, v)\lambda_t \rmd t\right].
     \end{align*}
Therefore, the expected net benefit of investment is 
     \begin{align}\mathbb{E}[L_T^0-L_T^z]&= \bar{\eta}\,\mathbb{E}\left[\int_0^T \bigl(v-S(H^z_t, v)\bigr) \lambda_t \rmd t \right]. \label{diff exp}
    \end{align}
\end{proposition}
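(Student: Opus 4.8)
The plan is to condition on the attack history $\mathcal{F}_T=\sigma(N_s;s\leq T)$, exploit the (conditional) independence structure encoded in Assumptions \ref{assB} and \ref{assC} to reduce each expectation to a functional of the Hawkes process alone, and then invoke the predictable compensator of $N$ to rewrite the resulting sum over jump times as an integral against the stochastic intensity.

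For $\mathbb{E}[L_T^0]$, I would first take the conditional expectation of \eqref{actual losses without security} given $\mathcal{F}_T$. By Assumption \ref{assB} the families $(\eta_i)$ and $(B_i^v)$ are mutually independent and independent of $N$, so each summand satisfies $\mathbb{E}[\eta_i B_i^v\mid\mathcal{F}_T]=\mathbb{E}[\eta_i]\,\mathbb{E}[B_i^v]=\bar{\eta}\,v$. Since $N_T$ is $\mathcal{F}_T$-measurable, this gives $\mathbb{E}[L_T^0\mid\mathcal{F}_T]=\bar{\eta}\,v\,N_T$, and taking expectations yields $\mathbb{E}[L_T^0]=\bar{\eta}\,v\,\mathbb{E}[N_T]$. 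The identity $\mathbb{E}[N_T]=\mathbb{E}[\int_0^T\lambda_t\,\rmd t]$ follows from Proposition \ref{expectation lambda N} together with the Tonelli theorem.

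For $\mathbb{E}[L_T^z]$, I would again condition on $\mathcal{F}_T$ in \eqref{actual losses with security}. Because $\eta_i$ is independent of $N$ we have $\mathbb{E}[\eta_i\mid\mathcal{F}_T]=\bar{\eta}$, while Assumption \ref{assC} provides both the conditional independence of $\eta_i$ and $B_i^{S(H^z_{\tau_i},v)}$ given $\mathcal{F}_T$ and the conditional breach probability $\mathbb{E}[B_i^{S(H^z_{\tau_i},v)}\mid\mathcal{F}_T]=S(H^z_{\tau_i},v)$. Hence $\mathbb{E}[\eta_i B_i^{S(H^z_{\tau_i},v)}\mid\mathcal{F}_T]=\bar{\eta}\,S(H^z_{\tau_i},v)$, and summing over $i\leq N_T$ gives
\[
\mathbb{E}[L_T^z\mid\mathcal{F}_T]=\bar{\eta}\sum_{i=1}^{N_T}S(H^z_{\tau_i},v).
\]

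The remaining and most delicate step is to convert this sum over jump times into an integral against the intensity. I would write $\sum_{i=1}^{N_T}S(H^z_{\tau_i},v)=\int_0^T S(H^z_t,v)\,\rmd N_t$, using that $N$ increases only by unit jumps at the times $\tau_i$. Since the investment rate $(z_t)$ is predictable and $H^z$ defined by \eqref{h eq integral} is continuous and adapted, the integrand $S(H^z_t,v)$ is predictable; as $S$ takes values in $[0,1]$ it is also bounded. The compensator of the Hawkes process being $\int_0^t\lambda_s\,\rmd s$, the process $M_t:=N_t-\int_0^t\lambda_s\,\rmd s$ is a martingale, and $\int_0^\cdot S(H^z_t,v)\,\rmd M_t$ is a genuine martingale, the integrability $\mathbb{E}[\int_0^T\lambda_t\,\rmd t]=\mathbb{E}[N_T]<\infty$ from Proposition \ref{expectation lambda N} ruling out any localization issue. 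Taking expectations therefore gives $\mathbb{E}[\int_0^T S(H^z_t,v)\,\rmd N_t]=\mathbb{E}[\int_0^T S(H^z_t,v)\lambda_t\,\rmd t]$, which combined with the previous display yields the stated formula for $\mathbb{E}[L_T^z]$. The net-benefit identity \eqref{diff exp} then follows immediately by linearity of expectation. The main technical point to handle with care is precisely this compensation argument: one must verify the predictability of $S(H^z_t,v)$ and the integrability needed to pass from the local-martingale property to a true zero-expectation statement.
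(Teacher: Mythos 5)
Your proof is correct and follows essentially the same route as the paper: conditioning on $\mathcal{F}_T$, using Assumptions \ref{assB} and \ref{assC} to reduce the sum to $\bar{\eta}\sum_{i\le N_T}S(H^z_{\tau_i},v)$, and then invoking the compensator (definition of intensity) together with the continuity of $H^z$ to pass to $\mathbb{E}[\int_0^T S(H^z_t,v)\lambda_t\,\rmd t]$. The only cosmetic difference is that the paper obtains the $z\equiv 0$ case as a specialization of the general formula rather than via a separate direct computation, and your explicit verification of predictability and integrability makes the martingale step slightly more careful than the paper's one-line appeal to the definition of intensity.
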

\begin{proof}
Let $z=(z_t)_{t\in[0,T]}$ be an arbitrary cybersecurity investment rate process. Applying the tower property of conditional expectation and making use of Assumptions \ref{assB} and \ref{assC}, we can compute
\begin{align*}
\mathbb{E}[L^z_T]
&= \mathbb{E}\left[\sum_{i=1}^{N_T} \eta_i B_i^{S(H^z_{\tau_i},v)}\right] \\
&= \mathbb{E}\left[\sum_{i=1}^{N_T}\mathbb{E}\left[ \eta_i  B_i^{S(H^z_{\tau_i},v)}\Big|\mathcal{F}_T\right]\right]    \\
&= \bar{\eta}\,\mathbb{E}\left[\sum_{i=1}^{N_T}S(H^z_{\tau_i},v)\right]   \\
&= \bar{\eta}\,\mathbb{E}\left[\int_0^TS(H^z_t,v)\rmd N_t\right]   \\
&= \bar{\eta}\,\mathbb{E}\left[\int_0^TS(H^z_t,v)\lambda_t\rmd t\right],
\end{align*}
where the last step follows by definition of intensity (see, e.g., \cite[Definition II.D7]{bremaud}), together with the continuity of the process $H^z$, which implies that $H^z_{\tau_i}$ coincides with the left-limit $H^z_{\tau_i-}$, for all $i\in\mathbb{N}^*$. 
The first equation in the statement of the proposition follows as a special case by taking $z\equiv 0$.
\end{proof}

\section{Optimal Cybersecurity Investment} \label{The optimization problem}

In this section, we determine the optimal cybersecurity investment policy, in the model setup introduced in Section \ref{A dynamic extension}. In the spirit of the original Gordon-Loeb model, we aim at characterizing the investment rate process $z^*=(z^*_t)_{t\in[0,T]}$ which maximizes the net trade-off between the benefits and the costs of cybersecurity over a planning horizon $[0,T]$.

To ensure the well-posedness of the optimization problem, we constrain the admissible investment policies to a suitably defined admissible set $\mathcal{Z}$.\footnote{We point out that, as a direct consequence of the Cauchy-Schwarz inequality, the integral in \eqref{h eq integral} is always well-defined for every process $z\in\mathcal{Z}$.}
In line with Remark \ref{rem:info}, in the following definition we restrict our attention to processes that are predictable with respect to the filtration generated by $N$.

\begin{definition}  
\label{def:adm}
The admissible set $\mathcal{Z}$ is defined as the set of all non-negative predictable processes $(z_t)_{t\in[0,T]}$ such that $\mathbb{E}[\int_0^T z_t^2 \rmd t]< \infty$.
\end{definition}

We now formulate the central optimization problem, which generalizes the benefit-cost trade-off function in \eqref{ENBIS} to a dynamic stochastic setting. The objective is to maximize the expected net benefit of cybersecurity investments:
\begin{equation} \label{linear criterion u z}
\sup_{z \in \mathcal{Z}} \mathbb{E}\left[  L_T^0-L_T^z- \int_0^T \left(\delta z_t+\frac{\gamma}{2}z_t^2\right) \rmd t + U(H^z_T)\right ],
\end{equation}
where $L^0_T$ and $L^z_T$ are defined in \eqref{actual losses without security} and \eqref{actual losses with security}, respectively, and the state variables $\lambda$ and $H^z$ satisfy the dynamics
\begin{align}
\rmd \lambda_t &= \xi(\alpha-\lambda_t)\rmd t + \beta\rmd {N}_t, 
\label{sde lambda} \\
\rmd H^z_t &= (z_t-\rho H^z_t)\rmd t.
\label{sde h}
\end{align}
In the objective functional \eqref{linear criterion u z}, the term $\mathbb{E}[L_T^0-L_T^z]$ represents the reduction in the expected losses due to the investment in cybersecurity. Differently from \eqref{ENBIS}, we consider in \eqref{linear criterion u z} a non-linear cost function $z\mapsto \delta z+\gamma z^2/2$, for $\delta,\gamma>0$. 
The non-linearity penalizes irregular or highly concentrated investment strategies, reflecting real-world constraints and incentivizing smoother, more sustained cybersecurity efforts (e.g., continuous IT updates versus abrupt large-scale interventions).
The term $U(H^z_T)$ accounts for the residual utility of the cybersecurity level reached at the end of the planning horizon. This accounts for the fact that cybersecurity investment carries a long-term benefit, since the entity does not cease to exist after the planning horizon. As usual, the function $U:\mathbb{R}_+\to\mathbb{R}$ is assumed to be an increasing and concave utility function.

Up to a rescaling of the model parameters, there is no loss of generality in taking $\delta=1$. Hence, making use of Proposition \ref{prop:expectations}, we can equivalently rewrite problem \eqref{linear criterion u z} as follows:
\begin{equation} \label{linear criterion new z}
\sup_{z \in \mathcal{Z}} \mathbb{E}\left[\int_0^T \left( \bar{\eta}\bigl(v-S(H^z_t,v)\bigr)\lambda_t-z_t-\frac{ \gamma}{2}z_t^2 \right) \rmd t +  U(H^z_T)\right ].
\end{equation}

\begin{remark}
The objective functional \eqref{linear criterion u z} is linear in the losses $(\eta_i)_{i\in\mathbb{N}^*}$. More specifically, formula \eqref{linear criterion new z} shows that the loss distribution enters into the optimization problem only through its first moment.
In turn, this implies that the optimal policy derived in Theorem \ref{thm:hjb} below depends only on $\bar{\eta}$ and not on higher-order moments of the loss distribution. Under nonlinear objective functionals, higher moments and the tail behavior of the loss distribution would affect the optimal policy.
\end{remark}

Problem \eqref{linear criterion new z} is a bi-dimensional stochastic optimal control problem, where the stochastic intensity process $(\lambda_t)_{t\in[0,T]}$ acts as an additional state variable beyond the controlled process $(H^z_t)_{t\in[0,T]}$. 
Due to the Markovian structure of the system, dynamic programming techniques can be applied for the solution of \eqref{linear criterion new z}.
To this end, we first introduce the following notation, for any $(t,\lambda,h)\in[0,T]\times(0,\infty)\times\mathbb{R}_+$:
\begin{itemize}
\item for all $s\in[t,T]$,
\begin{equation}\label{h markovian}
H_s^{t,h;z} := he^{-\rho(s-t)} + \int_t^s e^{-\rho(s-v)} z_v \rmd v
\end{equation}
 represents the cybersecurity level reached at time $s$ when starting from level $H_t=h$ at time $t$ and investing according to $z\in\mathcal{Z}$;
\item for all $s\in[t,T]$,
\[
\lambda_s^{t, \lambda} := \alpha +(\lambda-\alpha)e^{-\xi (s-t)} + \beta\sum_{i=N_t+1}^{ N_s} e^{-\xi(s-\tau_i)}
\]
 represents the stochastic intensity at time $s$ when starting from value $\lambda_t=\lambda$ at time $t$.
\end{itemize}
For any stopping time $\tau$ taking values in $[0,T]$, we denote by $\mathcal{Z}_{\tau}$ the admissible set $\mathcal{Z}$ restricted to the stochastic time interval $[\tau,T]$. 

We define as follows the benefit-cost trade-off functional $J$ associated to a given investment rate process $z$:
\begin{align*}
J(t,\lambda, h;z) &:= \mathbb{E}\left[\int_t^T \bar{\eta}\bigl(v-S(H_s^{t,h;z},v)\bigr)\lambda_s^{t,\lambda}\rmd s \right.\\
&\qquad\quad\left.
- \int_t^T\left(z_s+\frac{ \gamma}{2}z_s^2\right) \rmd s+ U(H_T^{t,h;z})\right].
\end{align*}
As a consequence of Definition \ref{def:adm} together with the concavity of the function $U$, the functional $J$ is always well-defined and finite.
Consequently, the value function associated to the stochastic optimal control problem \eqref{linear criterion new z} is given by
\begin{equation} \label{value function}
V(t,\lambda, h) := \sup_{z\in \mathcal{Z}_t} J(t,\lambda, h;z).
\end{equation}
In our dynamic model, the value function $V(t,\lambda,h)$ encodes the benefit-cost trade-off of cybersecurity investment over the residual planning horizon $[t,T]$, when considered at time $t$ with current cybersecurity level $h$ and intensity $\lambda$.

\begin{remark}\label{rem:propertiesV}
Standard arguments allow to prove that the function $V$ has the following behavior with respect to $\lambda$ and $h$:
\begin{itemize}
\item for every $(t,h)\in[0,T)\times\mathbb{R}_+$, the map $\lambda\mapsto V(t,\lambda,h)$ is strictly increasing: 
the benefit of cybersecurity investment is greater in the presence of a greater risk of cyberattacks;
\item for every $(t,\lambda)\in[0,T]\times(0,\infty)$, the map $h\mapsto V(t,\lambda,h)$ is strictly increasing and concave: increasing the current cybersecurity level $h$ always improves the expected future benefit, but the marginal value of additional protection decreases as the cybersecurity level $h$ grows. 
\end{itemize}
\end{remark}

The optimization problem formulated in \eqref{linear criterion u z} falls within the framework of the optimal control of a piecewise deterministic Markov process, in the sense of \cite{DavisPDMP}. Under the admissibility conditions imposed in Definition \ref{def:adm}, the performance functional $J$ is well-posed for every control $z\in\mathcal{Z}_t$. As a consequence, the same techniques used in the proof of \cite[Theorem 3.3.1]{Pham} yield that the value function satisfies the dynamic programming principle. 
More precisely, for all $(t,\lambda,h)$ in $[0,T) \times (0,\infty) \times \mathbb{R}_+$ and for every stopping time $\tau$ taking values in $[t,T]$, it holds that
\begin{align*}
V(t,\lambda, h) &= \sup_{z\in \mathcal{Z}_t} \mathbb{E}\left[
\int_t^{\tau}\bar{\eta}\bigl(v-S(H_s^{t,h;z},v) \bigr)\lambda_{s}^{t,\lambda}\rmd s \right.\\
&\qquad\quad\left. -\int_t^{\tau}\left(z_s+\frac { \gamma}{2} z_s^2 \right) \rmd s + V(\tau, \lambda_{\tau}^{t,\lambda}, H_{\tau}^{t,h;z})\right].
\end{align*}

We now proceed to characterize the value function $V$ as the solution to a Hamilton-Jacobi-Bellman (HJB) partial integro-differential equation (PIDE). Further theoretical results on the value function, together with a verification theorem under suitable regularity assumptions, are proved in \cite[Section 2.3]{ongarato2026thesis}.

\begin{theorem}\label{thm:hjb}
Suppose that the value function $V$ defined in \eqref{value function} is of class $\mathcal{C}^{1,1,1}$ (i.e., continuously differentiable in all its arguments). Then, the function $V$ solves the following HJB-PIDE:\footnote{For brevity of notation, in the statement and in the proof of this theorem, we omit to denote explicitly the dependence of $V$ on its arguments $(t,\lambda,h)$.}
\begin{equation}\label{hjb obsolescence}
\begin{aligned}
& \frac{\partial V}{\partial t} +\xi(\alpha-\lambda)\frac{\partial V}{\partial \lambda}-\rho h \frac{\partial V}{\partial h} \\
&\; + \lambda(V(t,\lambda+\beta,h)-V(t,\lambda,h)) + \bar{\eta}(v-S(h,v))\lambda \\
&\; + \frac{\bigl(\left(\frac{\partial V}{\partial h}-1\right)^+\bigr)^2}{2\gamma} =0,\\ 
& V(T,\lambda, h)= U(h).
\end{aligned}
\end{equation}
In addition, the optimal investment rate process $z^*$ is given by 
\begin{equation}\label{optimal control}z^*=\frac{\left(\frac{\partial V}{\partial h}-1\right)^+}{\gamma}.\end{equation}
\end{theorem}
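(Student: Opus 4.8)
The plan is to derive the HJB-PIDE \eqref{hjb obsolescence} from the dynamic programming principle stated above, by shrinking the stopping time $\tau$ to $t$ and identifying the infinitesimal generator of the controlled piecewise deterministic Markov process $(\lambda,H^z)$. Between consecutive attack times the pair $(\lambda_s,H^z_s)$ evolves deterministically according to the drifts $\xi(\alpha-\lambda_s)$ and $z_s-\rho H^z_s$, while at each jump time the intensity jumps by $\beta$ and $H^z$ remains continuous. Accordingly, for a constant control $z_s\equiv z\ge0$, the generator acting on a $\mathcal{C}^{1,1,1}$ function $V$ is
\[
\mathcal{L}^z V = \frac{\partial V}{\partial t} + \xi(\alpha-\lambda)\frac{\partial V}{\partial \lambda} + (z-\rho h)\frac{\partial V}{\partial h} + \lambda\bigl(V(t,\lambda+\beta,h)-V(t,\lambda,h)\bigr),
\]
the last term being the compensated contribution of the jumps of $N$, whose intensity is $\lambda$.

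First I would establish one inequality. Fix $z\ge0$ and apply this constant control on $[t,t+\varepsilon]$. The dynamic programming principle with $\tau=t+\varepsilon$ gives
\[
V(t,\lambda,h) \ge \mathbb{E}\Bigl[\int_t^{t+\varepsilon}\Bigl(\bar{\eta}\bigl(v-S(H^z_s,v)\bigr)\lambda_s - z - \frac{\gamma}{2}z^2\Bigr)\rmd s + V(t+\varepsilon,\lambda_{t+\varepsilon},H^z_{t+\varepsilon})\Bigr].
\]
Expanding $V(t+\varepsilon,\cdots)-V(t,\lambda,h)$ with the Dynkin formula for the PDMP — that is, the change-of-variables formula along the continuous drift together with the compensation of $\rmd N_s$ by $\lambda_s\rmd s$ — taking expectations so the martingale part vanishes, dividing by $\varepsilon$ and letting $\varepsilon\downarrow0$, yields
\[
0 \ge \mathcal{L}^z V + \bar{\eta}\bigl(v-S(h,v)\bigr)\lambda - z - \frac{\gamma}{2}z^2.
\]
Since this holds for every $z\ge0$, taking the supremum gives $0\ge\sup_{z\ge0}\{\cdots\}$. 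For the reverse inequality I would insert an $\varepsilon$-optimal control into the dynamic programming principle and run the same computation to obtain $0\le\sup_{z\ge0}\{\cdots\}$; combining the two delivers the equation with equality.

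It then remains to carry out the pointwise maximization. Collecting the control-dependent terms, the supremum appearing in the HJB is $\sup_{z\ge0}\bigl\{z\bigl(\frac{\partial V}{\partial h}-1\bigr)-\frac{\gamma}{2}z^2\bigr\}$. This is a strictly concave quadratic in $z$ whose unconstrained maximizer is $\frac{1}{\gamma}\bigl(\frac{\partial V}{\partial h}-1\bigr)$; projecting onto $[0,\infty)$ gives the maximizer $z^*=\frac{1}{\gamma}\bigl(\frac{\partial V}{\partial h}-1\bigr)^+$, which is exactly \eqref{optimal control}, and the corresponding maximal value is $\frac{1}{2\gamma}\bigl(\bigl(\frac{\partial V}{\partial h}-1\bigr)^+\bigr)^2$. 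Substituting this into the generator-plus-reward expression produces \eqref{hjb obsolescence}. The terminal condition $V(T,\lambda,h)=U(h)$ is immediate from the definition of $J$, since both integrals vanish at $t=T$ and the cybersecurity level satisfies $H^{T,h;z}_T=h$.

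The main obstacle is the rigorous justification of the generator computation. Because the driving process has no diffusive component, $\mathcal{C}^1$ regularity in each variable suffices to differentiate $V$ along the deterministic flow and to compensate the jump term by the definition of the intensity (as already used in the proof of Proposition \ref{prop:expectations}); no second-order derivatives arise. The delicate points are: (i) the passage to the limit $\varepsilon\downarrow0$ inside the expectation, which requires a dominated-convergence argument exploiting the right-continuity of the integrands, the local boundedness of the derivatives of $V$, and the bound $\mathbb{E}\bigl[\int_0^T z_s^2\,\rmd s\bigr]<\infty$ from Definition \ref{def:adm}; and (ii) the fact that admissible controls are only predictable, so that the value at an attack time $\tau_i$ is evaluated at the left limit $H^z_{\tau_i-}=H^z_{\tau_i}$, using the continuity of $H^z$, consistently with Remark \ref{rem:info}. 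These steps are standard for the optimal control of piecewise deterministic Markov processes in the sense of \cite{DavisPDMP}, and the accompanying verification argument is carried out in \cite[Section 2.3]{ongarato2026thesis}.
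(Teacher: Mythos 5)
Your proposal is correct and follows essentially the same route as the paper: the paper derives the HJB equation by invoking ``standard arguments based on It\^o's formula'' together with the dynamics \eqref{sde lambda}--\eqref{sde h} (citing a standard reference), which is precisely the dynamic-programming/generator computation you spell out, and it then performs the identical pointwise maximization of the concave quadratic $z\mapsto z\bigl(\frac{\partial V}{\partial h}-1\bigr)-\frac{\gamma}{2}z^2$ over $z\geq 0$, yielding the value $\frac{1}{2\gamma}\bigl(\bigl(\frac{\partial V}{\partial h}-1\bigr)^+\bigr)^2$ and the maximizer \eqref{optimal control}. Your additional remarks on the two-inequality argument, the predictability of the control, and the continuity of $H^z$ at attack times are consistent with the paper's framework and simply make explicit what the paper leaves to the cited references.
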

\begin{proof}
In view of the assumption that $V$ is of class $\mathcal{C}^{1,1,1}$, standard arguments based on It\^o's formula together with \eqref{sde lambda} and \eqref{sde h} imply that $V$ satisfies the following HJB equation (see, e.g., \cite[Section 5.2]{bensoussan2024stochastic}):
\begin{align*}
0 &= \sup_{z\geq0} \left(\frac{\partial V}{\partial t} +\xi(\alpha-\lambda)\frac{\partial V}{\partial \lambda} -\rho h \frac{\partial V}{\partial h} + z  \frac{\partial V}{\partial h}\right.\\
&\qquad\qquad\left.
+\lambda\bigl(V(t,\lambda+\beta,h)-V(t,\lambda,h)\bigr) \right.\\
&\qquad\qquad\left. 
+ \bar{\eta}(v-S(h,v))\lambda - z- \frac{\gamma}{2}z^2 \right) \\
&= \frac{\partial V}{\partial t} +\xi(\alpha-\lambda)\frac{\partial V}{\partial \lambda} -\rho h \frac{\partial V}{\partial h}\\
&\quad
+ \lambda\bigl(V(t,\lambda+\beta,h)-V(t,\lambda,h)\bigr) + \bar{\eta}(v-S(h,v))\lambda \\
&\quad + \sup_{z\geq0} \left(z \frac{\partial V}{\partial h}- z- \frac{\gamma}{2}z^2\right).
\end{align*}
The supremum in the last line is given by
\[
\sup_{z\geq 0} \left(z \frac{\partial V}{\partial h}- z -\frac{\gamma}{2}z^2\right)= \begin{cases}
0, & \text{if }\frac{\partial V}{\partial h}\leq  1, \\
    \frac{1}{2 \gamma}\left(\frac{\partial V}{\partial h}-1 \right)^2, & \text{otherwise,}
\end{cases} 
\]
and is reached by the optimal control given in equation \eqref{optimal control}.
\end{proof}

 A numerical method for the solution of the PIDE \eqref{hjb obsolescence} will be presented in Section \ref{numerics PIDE} and then applied in Section \ref{Numerical results}.

The optimal investment rate $z^*_t$ in equation \eqref{optimal control} depends on current time $t$, on the current level $\lambda_t$ of the stochastic intensity and on the current cybersecurity level $H_t$. In particular, the dependence on $\lambda_t$ makes $z^*_t$ adaptive, meaning that it reacts to the random arrival of cyberattacks. Since the arrival of an attack increases the likelihood of further attacks, due to the self-exciting behavior of the Hawkes process, this enables the decision maker to strategically increase the cybersecurity investment in order to raise the cybersecurity level as a defense for the incoming attacks. This important feature will be numerically illustrated in Section \ref{Optimal control along a trajectory}.

\begin{remark}
The optimal policy described in equation \eqref{optimal control} admits a clear economic interpretation: it is worth investing in cybersecurity whenever the marginal benefit of the investment is greater than its marginal cost. This insight aligns with the earlier results of \cite{krutilla2021benefits} in a dynamic but deterministic setup.
\end{remark}

We close this section with the following result, which provides an explicit lower bound for the value function $V$. 

\begin{proposition} \label{lower bound}
For every $(t,\lambda,h)\in[0,T]\times(0,\infty)\times\mathbb{R}_+$, it holds that
\begin{equation}\label{Vbound}
V(t,\lambda,h)\geq J(t, \lambda, h; \rho h),
\end{equation}
where 
\begin{align*}
J(t, \lambda, h; \rho h) &= U(h)
- \rho h\left(1+\frac{\gamma}{2}\rho h\right)(T-t) \\
&\quad + \bar{\eta}\bigl(v-S(h,v)\bigr) 
    \left(\frac{\alpha\xi}{\xi-\beta} (T-t) \right.\\
&\quad\left.
- \frac{1}{\xi-\beta} \Big(\lambda-\frac{\alpha\xi}{\xi-\beta}\Big) \left(e^{-(\xi-\beta)(T-t)}-1 \right)\right).
\end{align*}
\end{proposition}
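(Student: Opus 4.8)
The plan is to exploit the fact that $V$ is defined in \eqref{value function} as a supremum over admissible controls, so that evaluating the functional $J$ at any single admissible control immediately yields a lower bound. The candidate I would use is the constant, deterministic investment rate $z_s \equiv \rho h$ on $[t,T]$, which is the choice implicit in the notation $J(t,\lambda,h;\rho h)$.

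First I would check admissibility: the control $z_s \equiv \rho h$ is non-negative (since $\rho>0$ and $h\geq0$), trivially predictable as a deterministic constant, and square-integrable on $[t,T]$. Hence it belongs to $\mathcal{Z}_t$, and inequality \eqref{Vbound} follows at once from the definition of $V$ as a supremum. The substance of the proof is therefore to evaluate $J(t,\lambda,h;\rho h)$ in closed form.

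The key observation is that feeding $z_s\equiv\rho h$ into \eqref{h markovian} holds the cybersecurity level constant: substituting and integrating $\int_t^s e^{-\rho(s-v)}\rho h\,\rmd v = h\bigl(1-e^{-\rho(s-t)}\bigr)$ gives $H_s^{t,h;\rho h}=h$ for all $s\in[t,T]$. This is the natural ``maintenance'' strategy, in which the investment exactly offsets depreciation and the protection level is held fixed at $h$. Consequently the terminal-utility term is simply $U(h)$, the running cost integrates to $\rho h\bigl(1+\frac{\gamma}{2}\rho h\bigr)(T-t)$, and the benefit term reduces to the constant $\bar{\eta}\bigl(v-S(h,v)\bigr)$ multiplied by $\mathbb{E}\bigl[\int_t^T \lambda_s^{t,\lambda}\,\rmd s\bigr]$, since $v-S(h,v)$ can now be pulled out of the expectation.

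The final step is to compute $\mathbb{E}\bigl[\int_t^T \lambda_s^{t,\lambda}\,\rmd s\bigr]$. By the time-homogeneity of the intensity dynamics \eqref{sde lambda}, this equals $\mathbb{E}[N_{T-t}]$ for a Hawkes process started from intensity $\lambda$, which is given explicitly by the second formula of Proposition \ref{expectation lambda N} with $\lambda_0$ replaced by $\lambda$ and $t$ by $T-t$. Collecting the three contributions reproduces exactly the stated expression. I do not anticipate any genuine obstacle: the argument is a verification combined with the explicit moment formula of Proposition \ref{expectation lambda N}, and the only point requiring a little care is the correct shift of time origin and initial condition when applying that proposition on the interval $[t,T]$.
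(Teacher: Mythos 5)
Your proposal is correct and follows essentially the same route as the paper's proof: bound $V$ below by evaluating $J$ at the admissible constant control $\bar z\equiv\rho h$, observe that this choice keeps $H_s^{t,h;\rho h}\equiv h$, and then compute $\mathbb{E}\bigl[\int_t^T\lambda_s^{t,\lambda}\,\rmd s\bigr]$ via the time-shifted version of Proposition \ref{expectation lambda N}. The only difference is that you spell out the admissibility check and the time-origin shift explicitly, which the paper leaves implicit.
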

\begin{proof}
By definition of the value function \eqref{value function}, it holds that $V(t,\lambda,h)\geq J(t,\lambda,h,z)$, for any given $z\in\mathcal{Z}_t$. In particular, the constant process $\bar{z}\equiv \rho h$ belongs to $\mathcal{Z}_t$ and, therefore, inequality \eqref{Vbound} holds.
In view of equation \eqref{h markovian}, we have that
\[
H_s^{t,h;\rho h}
= he^{-\rho(s-t)} + \int_t^s e^{-\rho(s-v)} \rho h \rmd v
= h,
\]
for all $s\in[t,T]$. Therefore, we obtain that
\begin{align*}
J(t, \lambda, h; \rho h)
&= \bar{\eta}\bigl(v-S(h,v)\bigr)\mathbb{E}\left[\int_t^T\lambda_s^{t,\lambda}\rmd s\right] \\
&\quad - \rho h\left(1+\frac{\gamma}{2}\rho h\right)(T-t) 
+ U(h).
\end{align*}
The expectation $\mathbb{E}[\int_t^T\lambda_s^{t,\lambda}\rmd s]$ can be computed by a straightforward adaptation of Proposition \ref{expectation lambda N} (compare also with \cite[Theorem 3.6]{dassios2011dynamic}), thus completing the proof.
\end{proof}

\begin{remark} \label{static strategy}
The lower bound obtained in Proposition \ref{lower bound} is associated to a fixed investment rate which offsets technological obsolescence by maintaining the cybersecurity level constant over time (this follows directly from equation \eqref{sde h}). 
In Section \ref{Comparison with original Gordon-Loeb}, we numerically show that the optimal dynamic investment policy characterized in Theorem \ref{thm:hjb} consistently outperforms any constant investment strategy, highlighting the value of real-time adaptability in cybersecurity investment.
\end{remark}

\section{Numerical Methods} \label{Methods}

In this section, we describe the parameters' choice and the numerical methods adopted for the solution of the optimization problem introduced in Section \ref{The optimization problem}. 

\subsection{Specification of the Model Parameters}
\label{sec:parameters}

We report in Tables \ref{tab: breach function}, \ref{tab: hawkes intensity}, \ref{tab: other params} the {\em standard set} of the model parameters. Unless mentioned otherwise, the numerical a\-na\-ly\-sis will be performed using the standard set of parameters.

\begin{table}[h!]
\centering
   \begin{tabular}{ |c|c|c|c|c|c| } 
 \hline
 function type & $v$ & $a$ & $b$ \\ 
 \hline
  $S_I$ & $0.65$ & $10^{-1}$ & $1$ \\ 
 \hline
 
\end{tabular}
\caption{Specification of the security breach function.}
\label{tab: breach function}
\end{table}
\begin{table}[h!]
    \centering
\begin{tabular}{|c|c|c|c|}
 \hline
    $\alpha$  & $\xi$ & $\beta$ & $\lambda_0$  \\
    \hline
     $27$ & $15$ & $9$ & $27$ \\
      \hline
\end{tabular}
\caption{Parameters of the stochastic intensity.}
\label{tab: hawkes intensity}
\end{table}
\begin{table}[h!]
    \centering
\begin{tabular}{|c|c|c|c|c|c|}
 \hline
    $\delta$ & $\gamma$ & $\bar{\eta}$(k\$) & $U(h)$ & $\rho$ & $T$ \\
    \hline
     $1$ & $0.05$ & $10$ & $\sqrt h$ & $0.2$ & $1$ \\
      \hline
\end{tabular}
\caption{Parameters of the optimization problem.}
\label{tab: other params}
\end{table}

We employ a security breach probability function $S$ of class I, as defined in \eqref{class of S}. 
The parameters values in Table \ref{tab: breach function} are consistent with those determined in \cite{MN2017} through a calibration based on synthetic data and also used as benchmark inputs in subsequent works  (see, e.g., \cite{skeoch2022expanding,mazzoccoli2020robustness}).
Taking $v,a,b$ as in Table \ref{tab: breach function}, the function $h\mapsto S_I(h,v)$ is plotted in Figure \ref{fig: breach function}.

\begin{figure}
    \centering
    \includegraphics[width=1\linewidth]{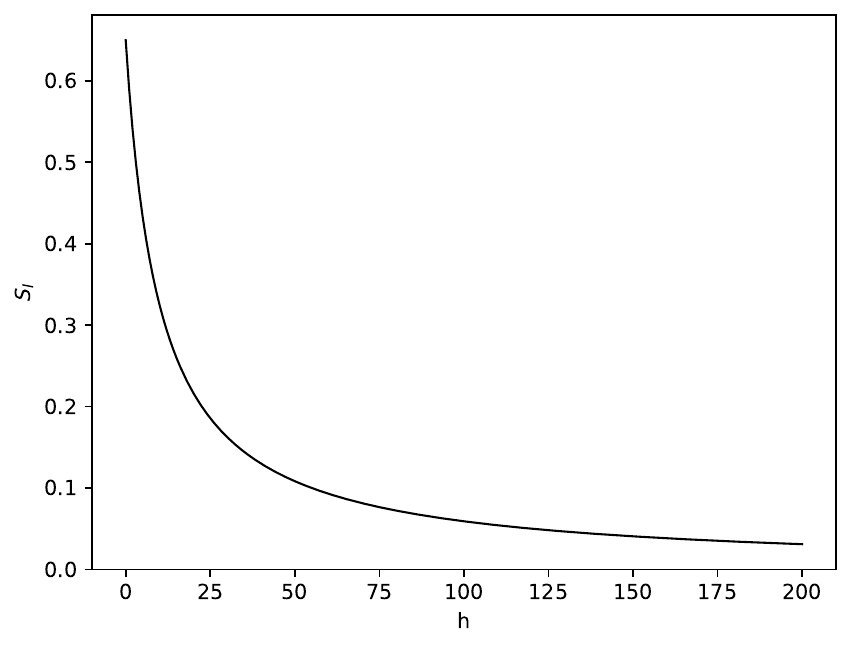}
    \caption{Security breach function (parameters as in Table \ref{tab: breach function}).}
    \label{fig: breach function}
\end{figure}

The parameters of the stochastic intensity of the Hawkes process (see Table \ref{tab: hawkes intensity}) are chosen to generate on average approximately 60 cyberattacks per year. We believe that this is a reasonable figure, in the absence of reliable estimates of the number of cyberattacks targeting a single entity.\footnote{Empirical estimates of the intensity of cyberattacks can be found in the recent works \cite{bessy2021multivariate}, \cite{boumezoued2023cyber}, \cite{li2023modelling}. However, these estimates are not suitable for our purposes, since they are based on the number of attacks at a worldwide scale, while our model takes the viewpoint of a single entity. 
An average of 60 cyberattacks per year is in line with the number of attacks per organisation reported in the Cyber Readiness Report 2024 by Hiscox (see \cite{HISCOX24}).}

\begin{remark}
For the standard set of parameters, we have $\lambda_0=\alpha$ and so the stochastic intensity $\lambda_t$ can be expressed as follows:
\begin{equation}\label{eq:lambda_new}
\lambda_t =  \lambda_0+ \beta\sum_{i=1}^{N_t} e^{-\xi(t-\tau_i)}.
\end{equation}
\end{remark}

We consider a one-year planning horizon ($T=1$) and set an average loss of $10\text{k}\$ $ for each successful breach, resulting in a total expected annual loss of approximately $390 \text{k}\$ $ without cybersecurity investments, which is in the same order of magnitude of \cite{skeoch2022expanding}.
The depreciation rate is set at $\rho=0.2$, consistently with the technological depreciation rates considered in \cite{krutilla2021benefits}. The parameter $\gamma$ is set at a rather low value, in order to avoid an excessive penalization of large investment rates. Finally, we choose $U(h)=\sqrt{h}$, representing a strictly increasing and concave CRRA utility function.

\subsection{Numerical Solution of the HJB-PIDE}
\label{numerics PIDE}

As shown in Section \ref{The optimization problem}, determining the optimal cybersecurity investment requires the solution of the non-linear PIDE \eqref{hjb obsolescence}. Due to the complexity of the problem, one cannot expect to find explicitly an analytical solution and, hence, numerical methods are required. We opt for the {\em method of lines}, as described in \cite{yuan1999ode}. This technique consists in discretizing the PIDE in the spatial domain  $(\lambda,h)\in(0,\infty)\times\mathbb{R}_+$ but not in time, and then in integrating the semi-discrete problem as a system of ODEs. In our setting, we discretize the $(\lambda,h)$ dimensions with a central difference and then numerically solve the resulting ODE system. Similarly to the case of PIDEs arising in L\'evy models (see, e.g., \cite{cont2005finite}), the unbounded space domain $(0,\infty)\times\mathbb{R}_+$ is localized into a bounded domain $[\lambda_{\text{min}},\lambda_{\text{max}}]\times[h_{\text{min}},h_{\text{max}}]$.
We refer to Algorithm \ref{algorithm pide} for a precise description of the implementation of this method. 

In our implementation, we specify as follows the algorithm's meta-parameters:

\begin{table}[h!]
    \centering
\begin{tabular}{|c|c|c|c|c|c|c|c|}
 \hline
     $\lambda_{\text{min}}$ 
 & $\lambda_{\text{max}}$ & $\Delta \lambda$ & $h_{\text{min}}$ & $h_{\text{max}}$ & $\Delta h$  \\
    \hline
    $27$ & $216$ & $1$ & $0$ & $50$ & $0.5$ \\
      \hline
\end{tabular}
\caption{Meta-parameters for Algorithm \ref{algorithm pide}.}
\label{tab: algo params}
\end{table}

The value  $h_{\text{min}} = 0$ corresponds to the absence of cybersecurity investment, while $h_{\text{max}} = 50$ represents an upper bound which is rarely achieved in our setup under the standard parameter set.
We choose $\lambda_{\text{min}}=\lambda_0$, which coincides with the lower bound of the stochastic intensity $\lambda_t$, see equation \eqref{eq:lambda_new}. We set $\lambda_{\text{max}}=\mathbb{E}[\lambda_T] + 7 \sqrt{\text{Var}(\lambda_T)} \approx 216$, in order to ensure that the truncation of the intensity domain does not have any material impact on our numerical results.
The value function $V$ is extrapolated beyond $[\lambda_{\text{min}}, \lambda_{\text{max}}]$ by setting 
\[
V(t,\lambda,h) = V(t,\lambda_{\text{max}},h),
\qquad\text{ for all }
\lambda > \lambda_{\text{max}},
\]
analogously to the scheme implemented in \cite[Section 5.1]{ly2024optimal}. When plotting the function $V$ in Section \ref{Numerical results}, we shall consider a subinterval of $[\lambda_{\text{min}}, \lambda_{\text{max}}]$: intensity values close to $\lambda_{\text{max}}$ are rarely achieved and might lead to numerical instabilities.

\begin{algorithm}
\caption{Numerical solution of the PIDE \eqref{hjb obsolescence}}
    \begin{algorithmic}[1]
        \State Set  $\lambda_{\text{min}}, \lambda_{\text{max}}$, $h_{\text{min}}, h_{\text{max}}$.
        \State Discretize $[\lambda_{\text{min}}, \lambda_{\text{max}}]$, with $\lambda_0=\lambda_{\text{min}}, \lambda_N=\lambda_{\text{max}}$ and $\lambda_n-\lambda_{n-1}=\Delta \lambda$, for $n=1,\ldots,N$.
        \State Discretize $[h_{\text{min}}, h_{\text{max}}]$, with $h_0=h_{\text{min}}, h_M=h_{\text{max}}$ and $h_m-h_{m-1}=\Delta h$, for $m=1,\ldots,M$.
        \State Set $V_{n,m}(t):=V(t, \lambda_n, h_m)$, for all $n$ and $m$.
        \State Approximate the partial derivatives w.r.t. $\lambda$:
        \begin{align*}
\frac{\partial V}{\partial \lambda}(t, \lambda_n, h_m) &\approx \frac{V_{n+1,m}(t)-V_{n-1,m}(t)}{2\Delta \lambda},\\
\frac{\partial V}{\partial \lambda}(t, \lambda_0, h_m) &\approx \frac{V_{1,m}(t)-V_{0,m}(t)}{\Delta \lambda},\\
\frac{\partial V}{\partial \lambda}(t, \lambda_N, h_m) &\approx \frac{V_{N,m}(t)-V_{N-1,m}(t)}{\Delta \lambda}.
        \end{align*}
         \State Approximate the partial derivatives w.r.t. $h$:
         \begin{align*}
\frac{\partial V}{\partial h}(t, \lambda_n, h_m) &\approx \frac{V_{n,m+1}(t)-V_{n,m-1}(t)}{2\Delta h},\\
\frac{\partial V}{\partial h}(t, \lambda_n, h_0) &\approx \frac{V_{n,1}(t)-V_{n,0}(t)}{\Delta h},\\
\frac{\partial V}{\partial h}(t, \lambda_n, h_M) &\approx \frac{V_{n,M}(t)-V_{n,M-1}(t)}{\Delta h}.
         \end{align*}
        \State Let $\tilde{n}=\frac{\lfloor \beta \rfloor}{\Delta \lambda}$ and set \[V(t,\lambda_n+\beta,h_m)\approx V_{(n+\tilde{n}) \wedge N,m}(t).\]
        \State Solve the ODE system given for all $n,m$ by
        \begin{dmath*}
            V'_{n,m}(t)=\xi(\lambda_n-\alpha)\frac{V_{n+1,m}(t)-V_{n-1,m}(t)}{2\Delta \lambda}+\rho h {\frac{V_{n,m+1}(t)-V_{n,m-1}(t)}{2\Delta h}} - \lambda_n(V_{n+\tilde{n}\wedge N,m}(t)-V_{n,m}(t)) - \bar{\eta}(v-S(h_m,v))\lambda_n - \frac{\left(\left(\frac{V_{n,m+1}(t)-V_{n,m-1}(t)}{2\Delta h}-1\right)^+\right)^2}{2\gamma}, 
        \end{dmath*}
   $V_{n,m}(T)= U(h_m).$
    \end{algorithmic}
    \label{algorithm pide}
\end{algorithm}

We have implemented Algorithm \ref{algorithm pide} in Python, using the built-in ODE solver scipy.integrate.solve\_ivp.
We make use of an implicit Runge-Kutta method of the Radau IIA family of order 5 (see \cite{hairer1993solving} for further details). 
The computations were performed on an Intel Xeon E5520 CPU equipped with 32 GB of RAM, requiring approximately 23 hours to compute the value function and the corresponding optimal control.
To empirically assess the convergence of Algorithm \ref{algorithm pide}, we verified that standard discretization error metrics decrease consistently as the spatial grid in $(\lambda, h)$ is refined.

\subsection{Optimal Cybersecurity Investment Rate}

Besides determining the optimal net benefit of cybersecurity investments, we aim at computing the real-time adaptive strategy that best responds to the arrival of cyberattacks. To this end, after solving the PIDE \eqref{hjb obsolescence} via Algorithm \ref{algorithm pide}, we compute numerically the optimal investment rate given in equation \eqref{optimal control} along a simulated sequence of cyberattacks. This entails simulating a trajectory of the stochastic intensity $(\lambda_t(\omega))_{t\in[t_{\text{init}},T]}$, starting from an initial cybersecurity level $H_{\text{init}}$ at time $t_{\text{init}}$. Our numerical method for the computation of the optimal investment rate is described in Algorithm \ref{algorithm control intensity trajectory} and will be numerically implemented in Sections \ref{Gain computation} and \ref{Optimal control along a trajectory}.

\begin{algorithm}
\caption{Numerical computation of the optimal control}
\begin{algorithmic}[1]
\State Set $t_{\text{min}}, t_{\text{max}}$, $\lambda_{\text{min}}, \lambda_{\text{max}}$, $h_{\text{min}}, h_{\text{max}}$.
\item Discretize $[t_{\text{min}}, t_{\text{max}}]$, with $t_0=t_{\text{min}}, t_I=t_{\text{max}}$ and $ t_i-t_{i-1}=\Delta t$, for $i=1,\ldots,I$.
        \State Discretize $[\lambda_{\text{min}}, \lambda_{\text{max}}]$, with $\lambda_0=\lambda_{\text{min}}, \lambda_N=\lambda_{\text{max}}$ and $\lambda_n-\lambda_{n-1}=\Delta \lambda$, for $n=1,\ldots,N$.
        \State Discretize $[h_{\text{min}}, h_{\text{max}}]$, with $h_0=h_{\text{min}}, h_M=h_{\text{max}}$ and $h_m-h_{m-1}=\Delta h$, for $m=1,\ldots,M$.
\State Compute $V(t_i, \lambda_n, h_m)$ and $z^*(t_i, \lambda_n, h_m)$, for $i=0, \ldots, I$,  $n=0, \ldots N$,  $m=0, \ldots, M$.
\State Simulate a trajectory $\lambda_{t_i}(\omega)$, $i=0,\ldots,I$.
\State For the initial time $t_{\text{init}}\geq t_{\text{min}}$, set $\bar i:=\text{argmin}_i \{|t_i-t_{\text{init}}|\}$.
\State Consider the initial state $H_{t_{\bar i}}=H_{\text{init}}$:
\For{$i$ in $\bar{i}, \dots, I$,}
{\State set $k:=\text{argmin}_{\ell} \{|\lambda_{\ell} -\lambda_{t_i}(\omega)|\}$;
\State set 
$j:=\text{argmin}_m \{|h_m-H_{t_i}|\}$;
\State let $z_{t_i}^*=z^*(t_i, \lambda_k, h_j)$;
\State $H^{z^*}_{t_{i+1}}:=H_{t_i}^{z^*}-\rho H_{t_i}^{z^*} \Delta t + z^*_{t_i} \Delta t$.\EndFor}
\end{algorithmic}
\label{algorithm control intensity trajectory}
\end{algorithm}

\section{Results and Discussion} \label{Numerical results}

In this section, we report some numerical results that illustrate  the key properties and implications of the model. In particular, we are interested in assessing the benefit of adopting the optimal dynamic cybersecurity investment policy.

\subsection{Value Function and Optimal Cybersecurity Policy} \label{Value function and optimal control}

Figure \ref{fig: value function optimal control} displays the value function $V$ and the optimal cybersecurity investment rate $z^*$. 
In panels \ref{fig:vf standard lambda fixed} and \ref{fig:control standard lambda fixed} we plot, respectively, $V$ and $z^*$ for fixed intensity $\lambda=27$, varying $t$ and $h$. Coherently with Remark \ref{rem:propertiesV}, we observe that the value function is increasing in $h$, while the optimal investment rate is decreasing. This behavior reflects the fact that higher cybersecurity levels yield greater benefits and reduce the need for further cybersecurity investments. 
In panels \ref{fig:vf standard h fixed} and \ref{fig:control standard h fixed} we plot, respectively, $V$ and $z^*$ for fixed $h=0$, varying $t$ and $\lambda$. Coherently with Remark \ref{rem:propertiesV}, we observe that both the value function and the optimal investment rate are increasing in $\lambda$. This is explained by the fact that, in the presence of a higher risk of cyberattacks, investing in cybersecurity becomes more valuable due to the larger potential of mitigating expected losses.   
As can be seen from panels \ref{fig:vf standard lambda h fixed} and \ref{fig:control standard lambda h fixed}, both the value function and the optimal investment rate decrease over time. This is due to the fact that, under the standard parameter configuration (see Section \ref{sec:parameters}), the residual utility $U(H_T)$ of cybersecurity plays a relatively minor role and, therefore, the value of additional cybersecurity investment declines as the end of the planning horizon $[0,T]$ approaches. 

\begin{figure*}
    \centering
    \subfloat[Value function $V(t, \lambda, h)$ for $\lambda=27$.\label{fig:vf standard lambda fixed}]{\includegraphics[scale=0.5]{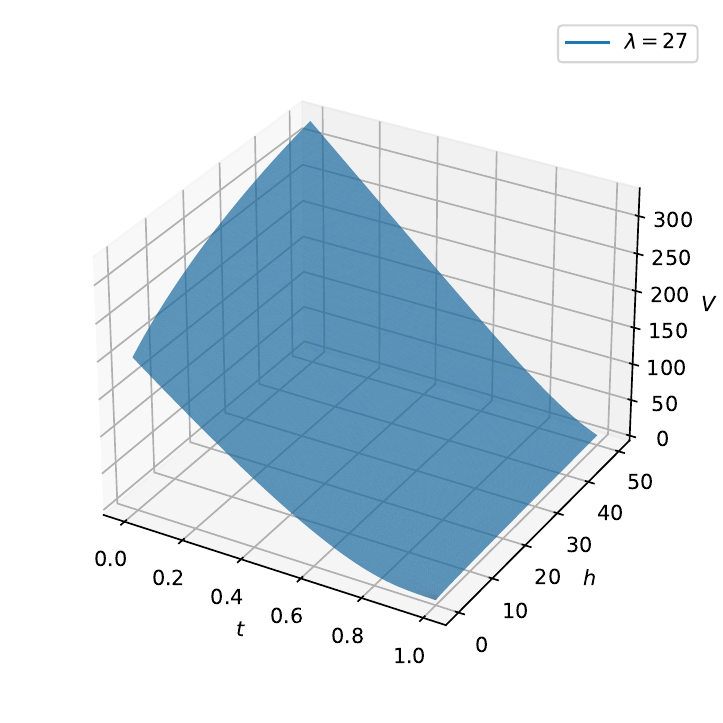}}
    \subfloat[Value function $V(t, \lambda, h)$ for $h=0$.\label{fig:vf standard h fixed}]{\includegraphics[scale=0.5]{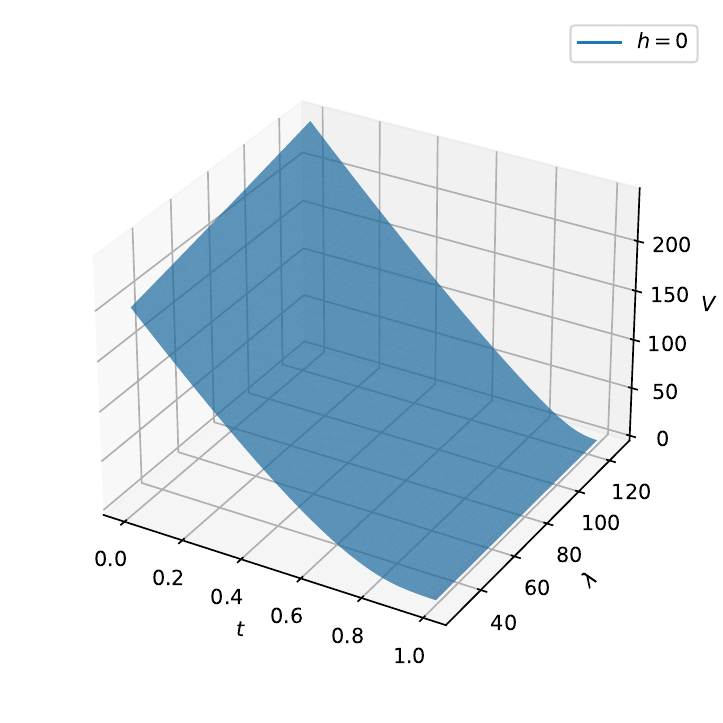}}\subfloat[Value function $V(t, \lambda, h)$ for $\lambda=27, h=0$.\label{fig:vf standard lambda h fixed}]
    {\raisebox{10pt}{\includegraphics[scale=0.35]{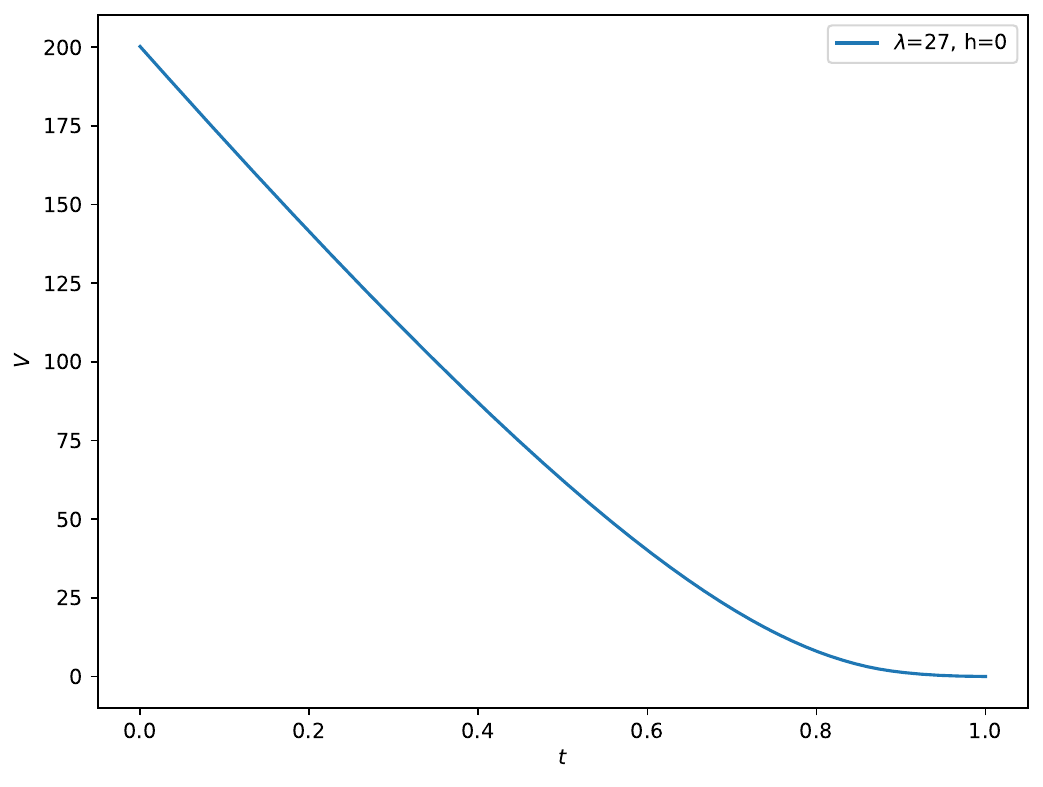}}}\\
    \vspace{-7pt}
    \centering
    \subfloat[Optimal control $z^*_t( \lambda, h)$ for $\lambda=27$.\label{fig:control standard lambda fixed}]{\includegraphics[scale=0.5]{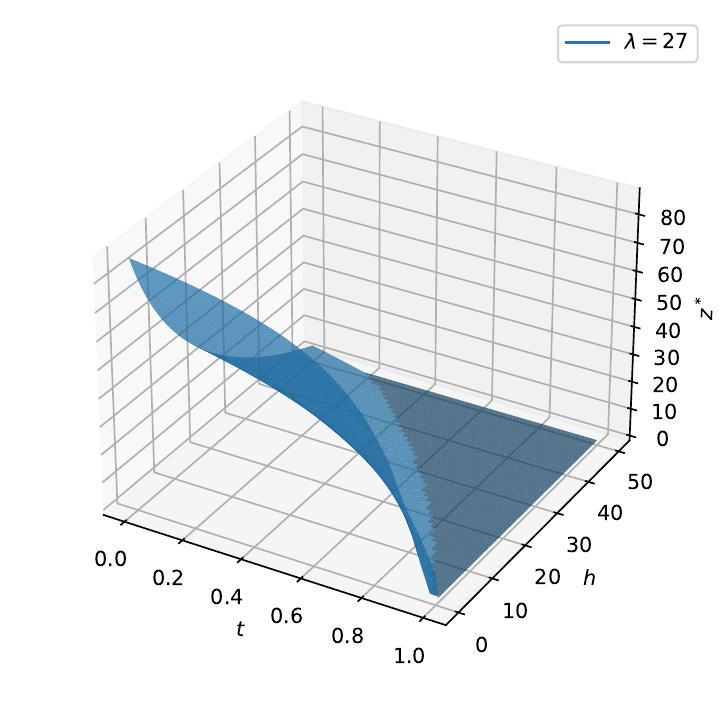}}
    \subfloat[Optimal control $z^*_t( \lambda, h)$ for $h=0$.\label{fig:control standard h fixed}]{\includegraphics[scale=0.5]{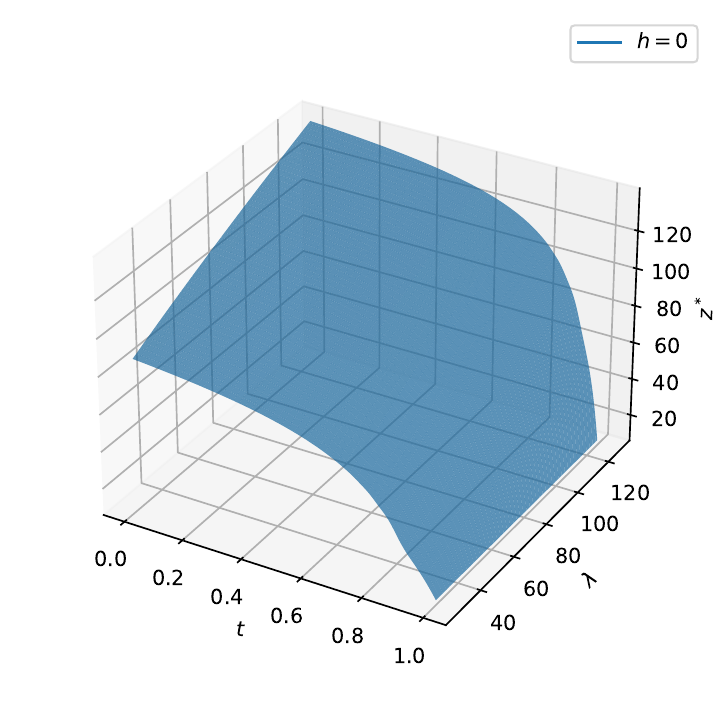}}
    \subfloat[Optimal control $z^*_t( \lambda, h)$ for $\lambda=27, h=0$.\label{fig:control standard lambda h fixed}]{\includegraphics[scale=0.35]{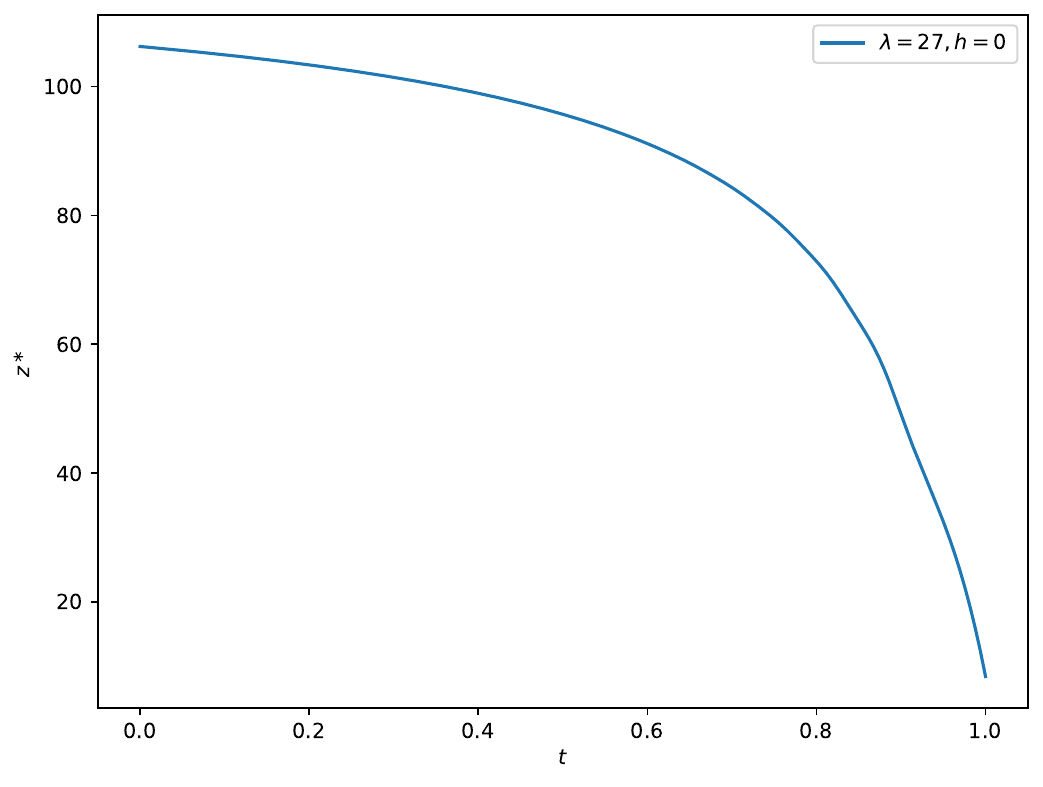}}
    \caption{Value function and optimal investment rate computed under the standard parameters set.}
    \label{fig: value function optimal control}
\end{figure*}

\subsection{Parameter Sensitivity}

In the proposed model, the parameter $\xi$ determines the clustering behavior of cyberattacks. Specifically, higher values of $\xi$ correspond to a more rapid decay of the intensity following each attack, thereby reducing the likelihood of temporally clustered attacks. To evaluate the impact of clustered cyberattacks, we compare in Figure \ref{fig: value function otpimal control varying xi} the value function and the optimal investment rate under two scenarios: $\xi=15$ (more clustered attacks) and $\xi=50$ (less clustered attacks). We observe that both the value function and the optimal investment rate are substantially greater in the case $\xi=15$: if cyberattacks occur in clustered patterns, it is optimal to invest more in cybersecurity in order to mitigate the risk of large cumulative losses arising from rapid attack sequences. This finding underscores the critical importance of accounting for clustering dynamics in the optimal management of cyber-risk.

We also analyze the role of obsolescence in cybersecurity investment decisions, motivated by the analysis in \cite{krutilla2021benefits}, which highlights its significance in a dynamic setup. Figure \ref{fig: value function optimal control comp rho} displays the value function and the optimal investment rate under two contrasting depreciation scenarios: $\rho=0$ (no obsolescence) and $\rho=1$ (high obsolescence).\footnote{In practical terms, a depreciation rate of $\rho = 1$ implies that a given initial cybersecurity level $H_0$ depreciates by over $73\%$ over a one-year period.} We observe that $\rho=1$ leads to smaller values for $V$ and $z^*$, in line with the findings of  \cite{krutilla2021benefits} in a deterministic setup. Our results confirm that a rapid depreciation of cybersecurity effectiveness reduces both the expected net benefit and the incentive to invest.

\begin{figure*}
\centering
    \subfloat[Value function for different $\xi$, $\lambda=27, h=0$.\label{fig: value function xi}]{\includegraphics[scale=0.5]{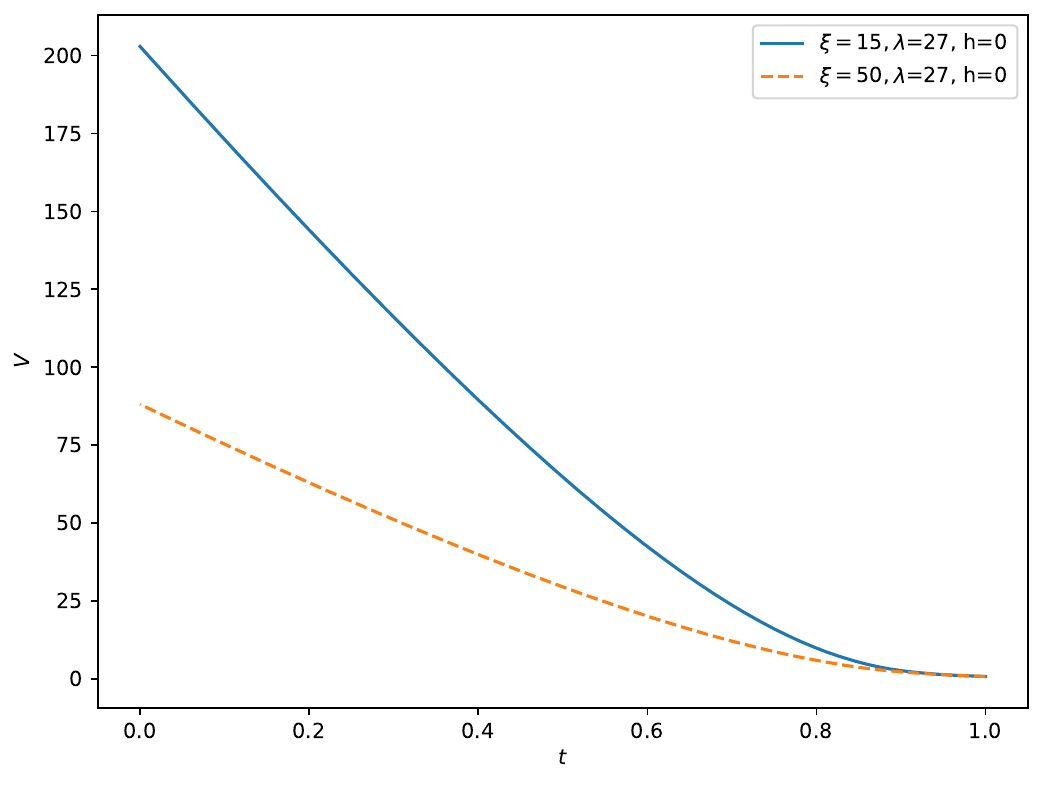}}
    \subfloat[Optimal control for different $\xi$, $\lambda=27, h=0$.\label{fig: control xi}]{\includegraphics[scale=0.5]{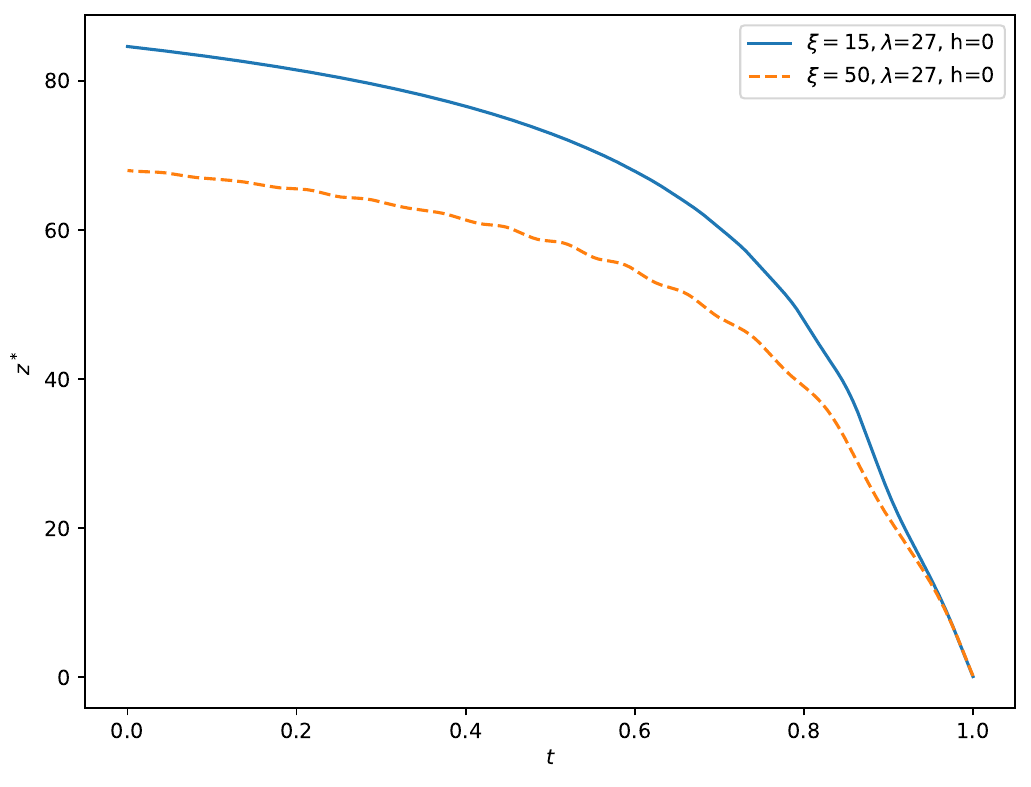}}
    \caption{Value function and optimal investment rate for $\xi=15$ and $\xi=50$, for fixed $h$ and $\lambda$.}
   \label{fig: value function otpimal control varying xi}
\end{figure*}

\begin{figure*}
    \centering
    \subfloat[Value function $V(t, \lambda, h)$ for $\lambda=27, h=0$.\label{fig:vf rho lambda h fixed}]{\includegraphics[scale=0.5]{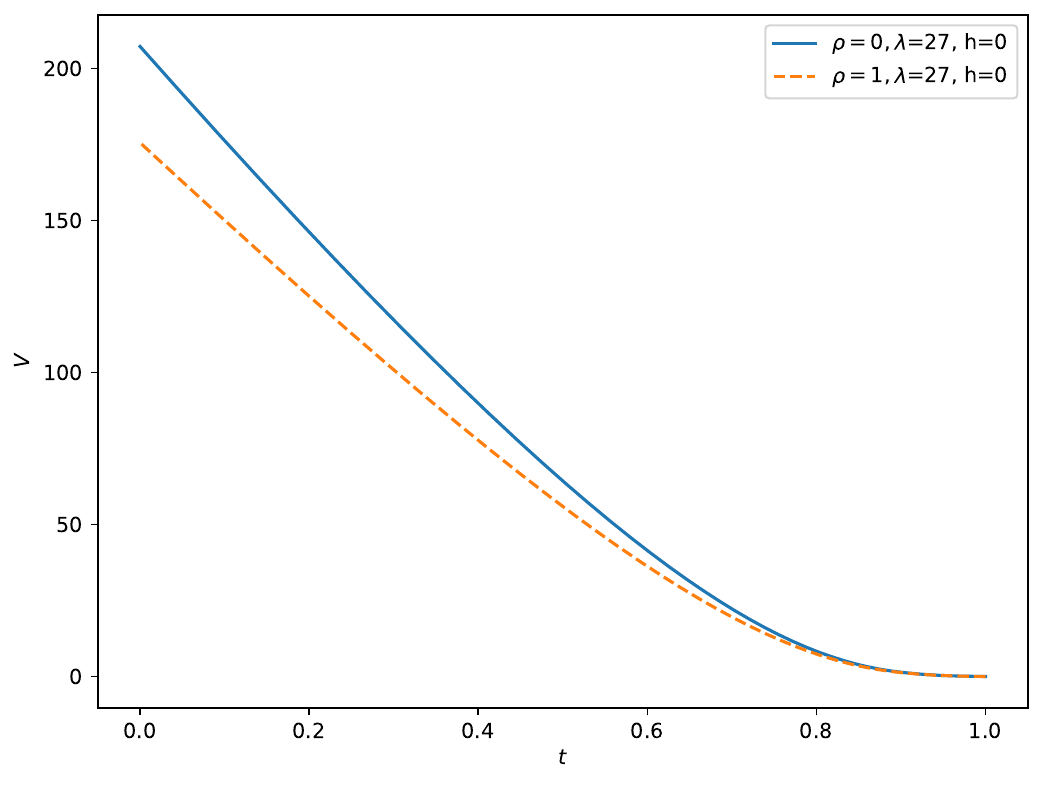}}
    \subfloat[Optimal control $z^*_t( \lambda, h)$ for $\lambda=27, h=0$.\label{fig:control rho lambda h fixed}]{\includegraphics[scale=0.5]{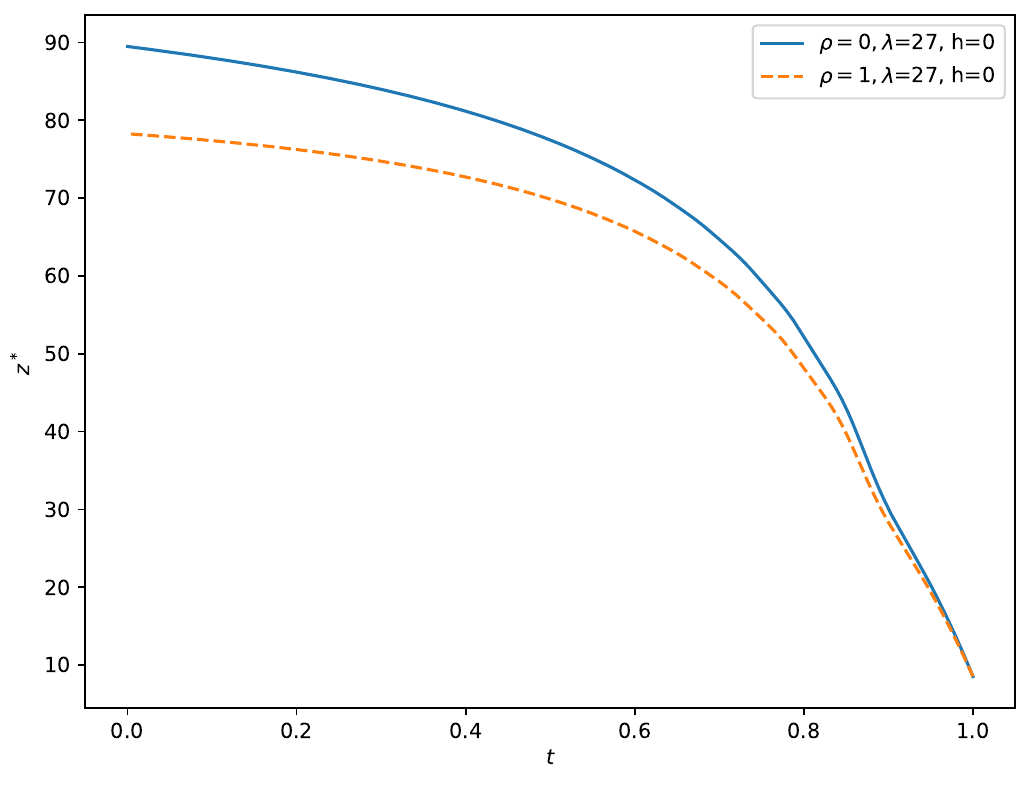}}
    \caption{Value function and optimal investment rate for $\rho=0$ and $\rho=1$, for fixed $h$ and $\lambda$.}
    \label{fig: value function optimal control comp rho}
\end{figure*}

\subsection{Comparison with a Static Investment Strategy} \label{Comparison with original Gordon-Loeb} 

The optimal investment rate $z^*$ characterized in Theorem \ref{thm:hjb} represents the real-time adaptive cybersecurity policy that best responds to the arrival of cyberattacks. 

In order to assess whether the adoption of an adaptive dynamic strategy provides a tangible benefit, we compare it against the best constant investment strategy, i.e., the strategy $z_t=\bar{z}$, for all $t\in[0,T]$, that maximizes the benefit-cost trade-off functional $J$. 
When investing according to a costant rate $\bar{z}$, the benefit-cost trade-off functional $J$ takes the following form:
\begin{equation}\label{J_constant_strategy}\begin{aligned}
J(t,\lambda, h;\bar{z}) &= \int_t^T \bar{\eta}\bigl(v-S(H_s^{t,h;\bar{z}},v)\bigr)\mathbb{E}[\lambda_s^{t,\lambda}]\rmd s\\
&\quad - (T-t)(\bar{z}+\frac{ \gamma}{2}\bar{z}^2) + U(H_T^{t,h;\bar{z}}),
\end{aligned}\end{equation}
where the cybersecurity level $H^{t,h,\bar{z}}$ is given by
\[
H_s^{t,h;\bar{z}}=he^{-\rho(s-t)} + \frac{\bar{z}}{\rho}(1-e^{-\rho(s-t)}).
\]
The optimal constant investment rate $\bar{z}^*$ solves the problem
\begin{equation}\label{constant optimization}
J(t,\lambda, h; \bar{z}^*):= \sup_{\bar{z}\in\mathbb{R}_+}J(t,\lambda, h;\bar{z}).
\end{equation}
In view of Proposition \ref{expectation lambda N}, the expectation $\mathbb{E}[\lambda^{t,\lambda}_s]$ in \eqref{J_constant_strategy} can be computed in closed form. Therefore, the optimization problem \eqref{constant optimization} reduces to a deterministic maximization with respect to a scalar variable, which can be easily solved numerically. To this effect, we adopt the built-in global scalar optimizer scipy.optimize.differential\_evolution in Python.

We quantify the relative gain obtained by investing according to the optimal dynamic policy $z^*$ versus the constant policy $\bar{z}^*$ by computing the following quantity:
\begin{equation}\label{gl gain percentage} 
\% \text{gain}(t,\lambda,h) := 100 \times \frac{V(t, \lambda, h)-J(t, \lambda, h; \bar{z}^*)}{J(t, \lambda, h; \bar{z}^*)}.
\end{equation}
Figure \ref{fig: GL gain constant} displays the relative gain over time for varying cybersecurity levels $h$, for $\lambda=27$ fixed. 
At the initial time $t=0$, the gain reaches $15\%$ for $h=0.5$, $14\%$ for $h=1$, $12\%$ for $h=2$, while it is $9.04\%$ for $h=5$, $5.7\%$ for $h=10$ and $2.6\%$ for $h=20$. 
These results show that the optimal dynamic investment strategy $z^*$ consistently outperforms the best constant strategy $\bar{z}^*$, underscoring the importance of adaptive and responsive cybersecurity investments.
The fact that the gain is rather small for large initial cybersecurity levels is coherent with the findings in Section \ref{Value function and optimal control}: when the initial cybersecurity level is already high, the  benefit of further investments diminishes, thereby reducing the relative advantage of the optimal policy.
Moreover, a further analysis shows that the gain increases monotonically with respect to $\lambda$, indicating that the advantage of adopting the dynamic optimal policy \eqref{optimal control} becomes more pronounced in high-risk scenarios.

\begin{figure}
    \subfloat{\includegraphics[scale=0.45]{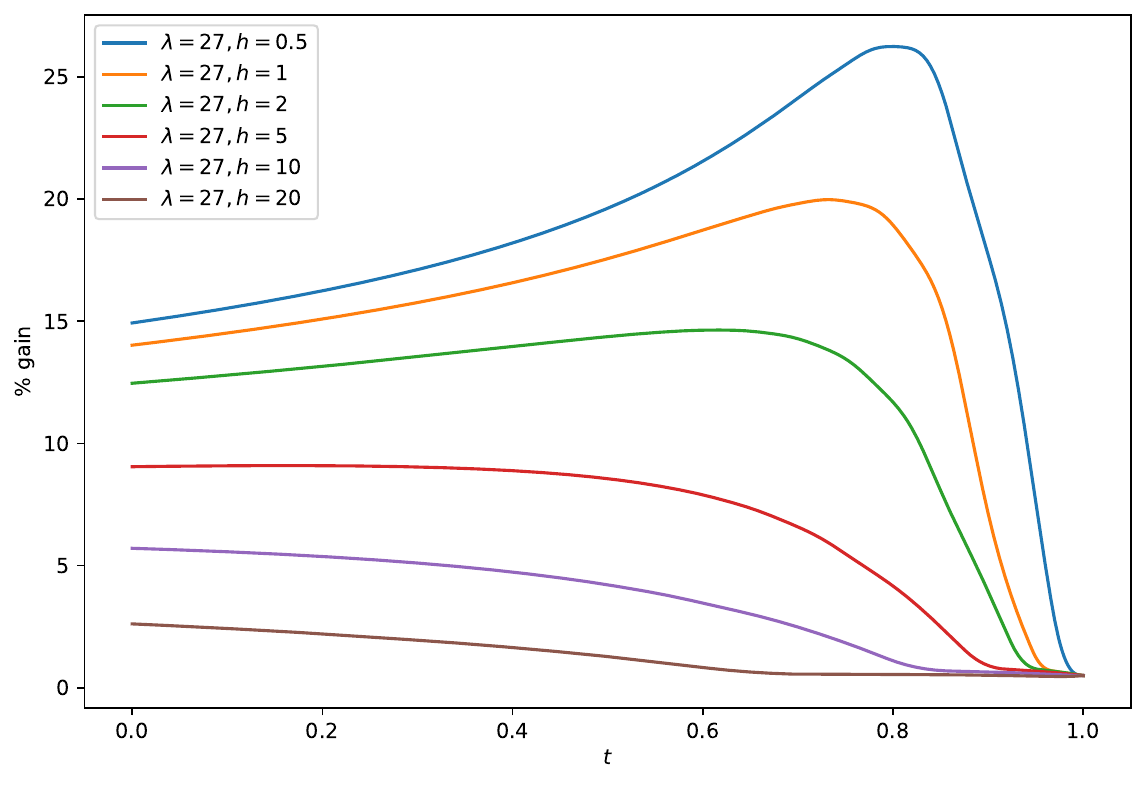}}
    \caption{Relative gain with respect to the optimal constant investment rate.}
   \label{fig: GL gain constant}
\end{figure}

\subsection{Comparison with a Standard Poisson Model} \label{Comparison with Poisson}

To further assess the impact of clustered cyberattacks, we compare our model, which features a self-exciting Hawkes process, with a simplified version based on a standard Poisson process. A Poisson process $P=(P_t)_{t\in[0,T]}$ is characterized by a constant intensity $\lambda^P$ and does not capture any temporal dependence in the arrival of attacks. Indeed, conditionally on $P_T=n$, the attack times are distributed as the order statistics of $n$ i.i.d. random variables uniformly distributed on $[0,T]$, for every $n\in\mathbb{N}$. This setup can be recovered as a special case of the model introduced in Section \ref{A dynamic extension} by setting $\beta=0$ and $\alpha=\lambda_0$ in the intensity dynamics \eqref{intensity-SDE}.

We consider the same optimization problem as in Section \ref{The optimization problem} and we replace the Hawkes process $N$ with a Poisson process $P$ of constant intensity $\lambda^P$ and denote the resulting optimal investment rate by $z^{P*}$. A key observation is that, in this case, problem \eqref{linear criterion new z} reduces to a deterministic optimal control problem. The associated value function, $V^P(t,h)$, solves the following PDE:
\begin{align} 
    &\frac{\partial V^P}{\partial t} -\rho h \frac{\partial V^P}{\partial h} + \lambda^P\bar{\eta}\,(v-S(h,v)) + \frac{\left( (\frac{\partial V^P}{\partial h}-1)^+\right)^2}{2 \gamma}=0, \nonumber\\
    &V^P(T, h) = U(h). 
\label{eq poisson problem}
\end{align}
This PDE can be numerically solved using a scheme similar to Algorithm \ref{algorithm pide}.
Analogously to Theorem \ref{thm:hjb}, the optimal investment rate $z^{P*}$ in the Poisson model is given by
\begin{equation}\label{Poi_opt_control}
z^{P*}=\frac{\left(\frac{\partial V^P}{\partial h}-1\right)^+}{\gamma}.
\end{equation}

\begin{remark}
The optimal policy $z^{P*}$ is deterministic.
This is due to the fact that, in the Poisson model, the occurrence of a cyberattack does not carry any informational content.
\end{remark}

We compare the Hawkes-based model with two Poisson-based benchmarks:
\begin{enumerate}
\item[(i)] a Poisson model with intensity $\lambda^P_b$ chosen as
\begin{equation} \label{lambda poi baseline}
        \lambda^P_b=\lambda_0=27;
\end{equation}
\item[(ii)] a Poisson model with intensity $\lambda^P_e$ chosen as
\begin{equation}\label{lambda poi expectation}
\lambda^P_e=\frac{\lambda_0 \xi }{\xi-\beta} + \frac{1-e^{-\xi T}}{T(\xi-\beta)}\left(\lambda_0-\frac{\lambda_0\xi}{\xi-\beta}\right) \approx 61. 
\end{equation}
\end{enumerate}
The first case corresponds to a Poisson process with the same baseline intensity of the Hawkes process. This scenario can be thought of as the situation where the entity underestimates the likelihood of cyberattacks (possibly due to relying on a limited or unrepresentative dataset) and considers it to be constant over time.
In the second case, in view of Proposition \ref{prop:expectations}, the value $\lambda^P_e$ is chosen so that $\mathbb{E}[P_T]=\mathbb{E}[N_T]$, ensuring that the Hawkes-based model and the Poisson model with intensity $\lambda^P_e$ generate the same expected number of cyberattacks over the planning horizon $[0,T]$. This reflects a case where the average attack frequency is estimated correctly, but the clustering dynamics are ignored.

We shall make use of the following notation:
\begin{itemize}
    \item $V^P_b(t,h)$ is the value function associated to the PDE \eqref{eq poisson problem} for the Poisson model with intensity $\lambda^P_b$ specified in \eqref{lambda poi baseline} and $z_t^{P*, b}(h)$ is the associated optimal control;
  \item $V^P_e(t,h)$ is the value function associated to the PDE \eqref{eq poisson problem} for the Poisson model with intensity $\lambda^P_e$ specified in \eqref{lambda poi expectation} and $z_t^{P*, e}(h)$ is the associated optimal control;
    \item $V(t,\lambda, h)$ is the value function associated to the PIDE \eqref{hjb obsolescence} and $z_t^*(\lambda, h)$ is the associated optimal control.
\end{itemize}

\subsubsection{Value Functions and Optimal Cybersecurity Policies}
\label{value_fcts_Poisson}
Figure \ref{fig: hawkes poisson baseline} displays the results of the comparison with the Poisson model (i) with intensity $\lambda^P_b$. We observe that both the value function and the optimal cybersecurity investment rate under the Hawkes-based model consistently dominate their counterparts in the Poisson model (i) across the entire planning horizon. This is a direct consequence of the fact that $\lambda_t\geq\lambda^P_b$, for all $t\in[0,T]$. In other words, the Poisson model (i) not only disregards the temporal clustering of cyberattacks, but also systematically underestimates their frequency. As a result, the perceived benefit of cybersecurity investment is lower, leading in turn to a suboptimal investment strategy.

\begin{figure*}
    \centering
    \subfloat[Value functions $V(t, \lambda^P_b, h)$ and $V^P_b(t,h)$ .\label{fig:vf baseline lambda fixed}]{\includegraphics[scale=0.5]{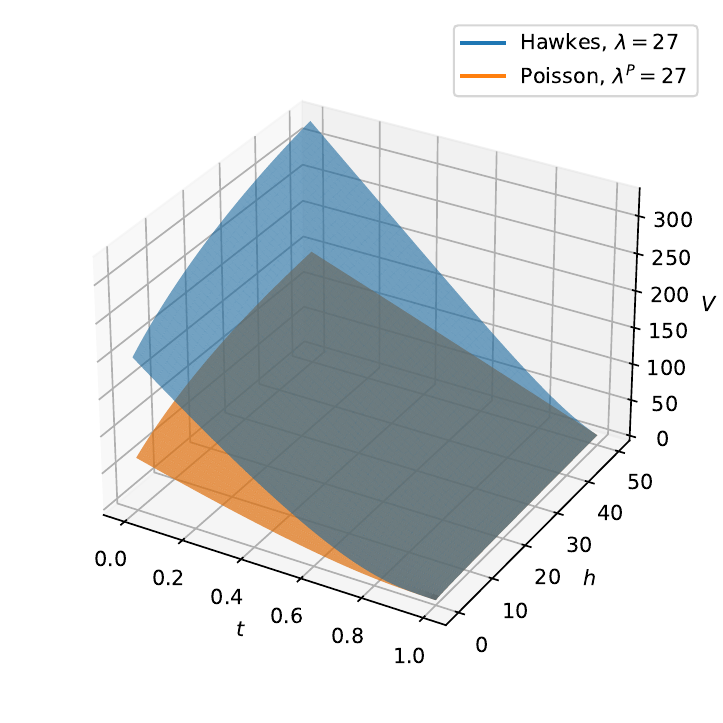}}
    \subfloat[Value functions $V(0, \lambda, h)$ and $V^P_b(0,h)$.\label{fig:vf baseline t fixed}]{\includegraphics[scale=0.5]{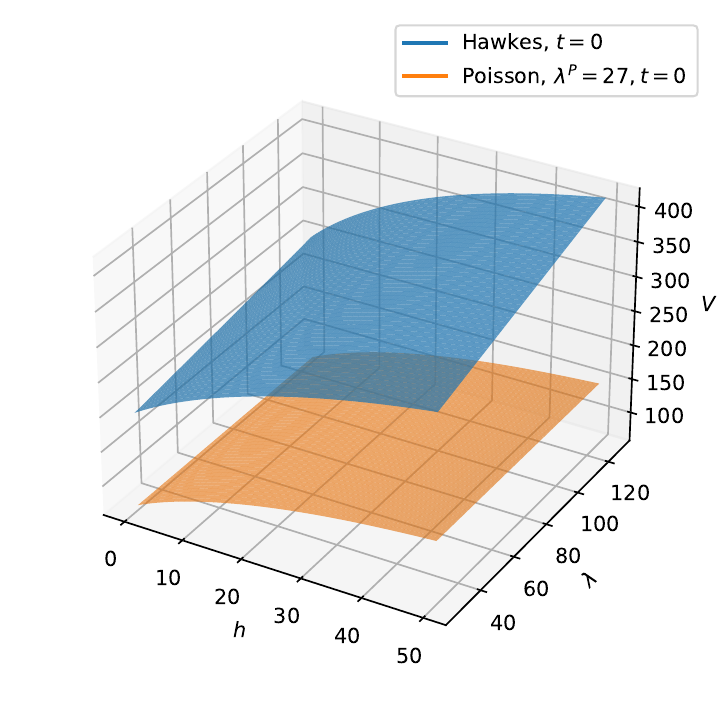}}
    \subfloat[Value functions $V(t, \lambda^P_b, 0)$ and $V^P_b(t,0)$.\label{fig:vf baseline lambda h fixed}]{\includegraphics[scale=0.35]{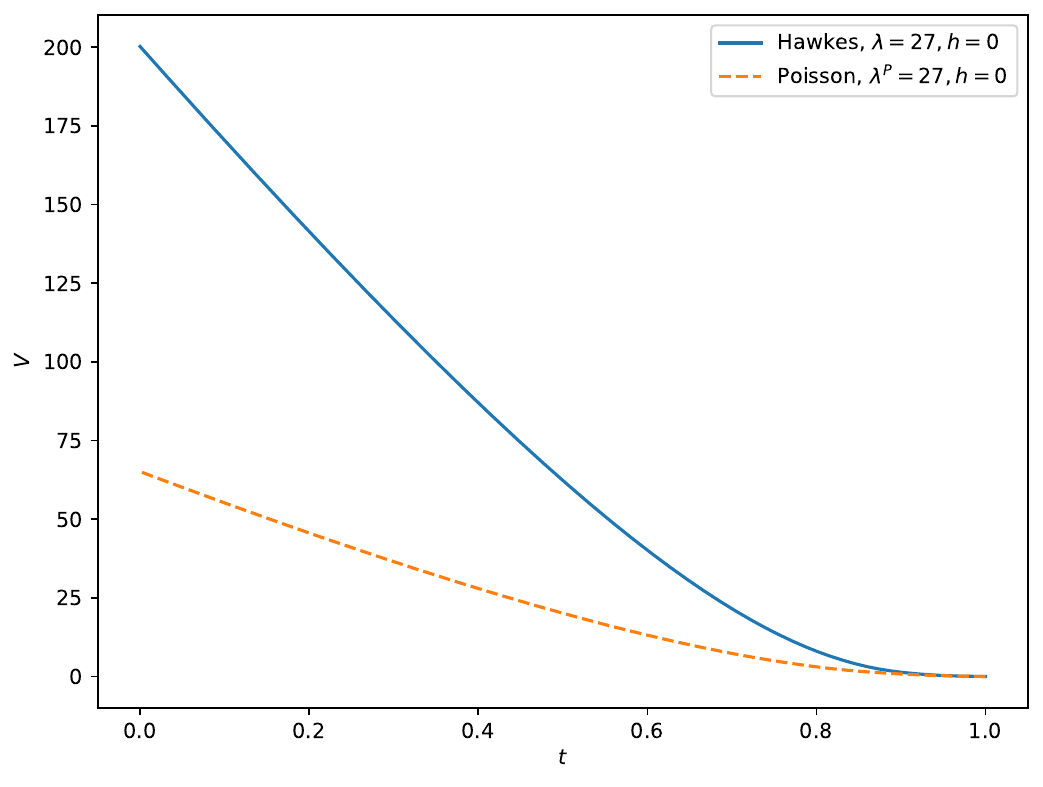}}\\
    \vspace{-7pt}
    \centering
    \subfloat[Optimal controls $z_t^*( \lambda_b^P, h)$ and $z_t^{P^*,b}(h)$.\label{fig:control baseline lambda fixed}]{\includegraphics[scale=0.5]{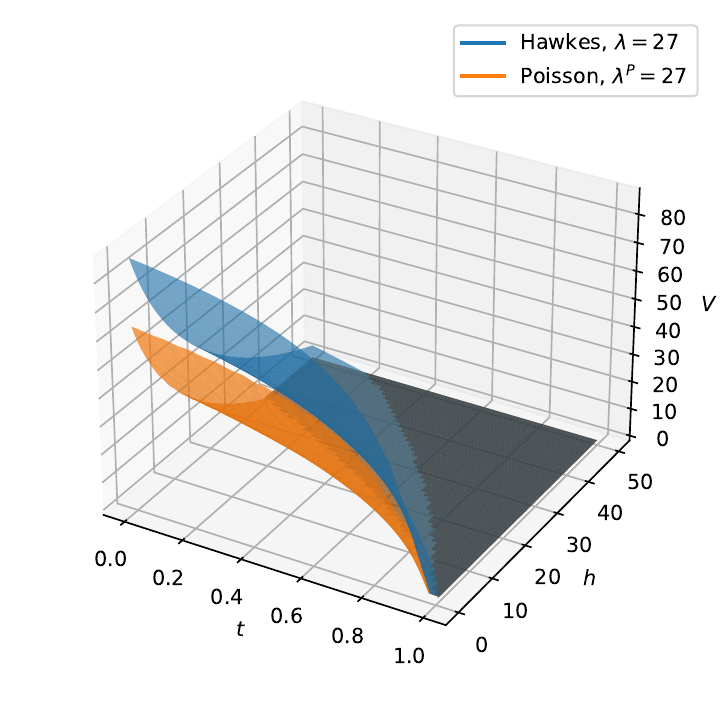}}
    \subfloat[Optimal controls $z_0^*( \lambda_b^P, h)$ and $z_0^{P^*,b}(h)$.\label{fig:control baseline t fixed}]{\includegraphics[scale=0.5]{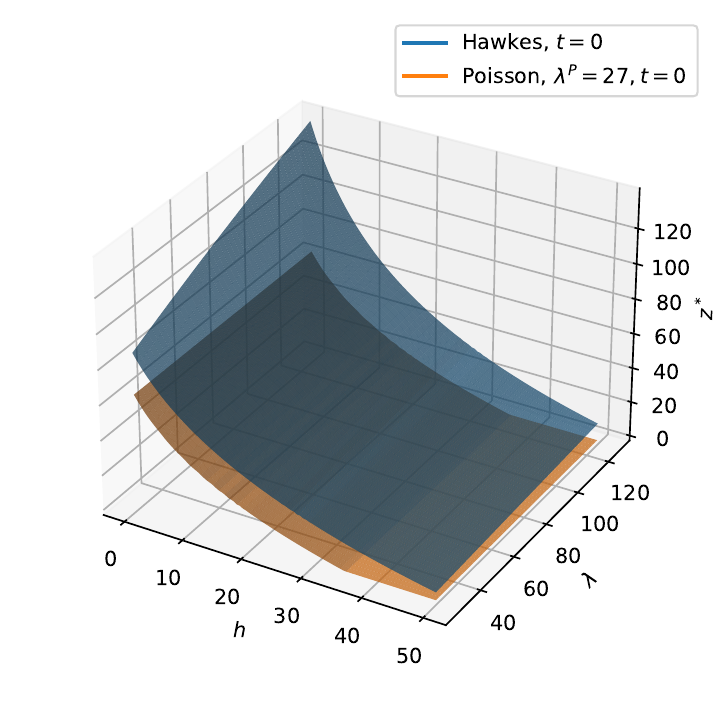}}
    \subfloat[Optimal controls $z_t^*( \lambda_b^P, 0)$ and $z_t^{P^*,b}(0)$.\label{fig:control baseline lambda h fixed}]{\includegraphics[scale=0.35]{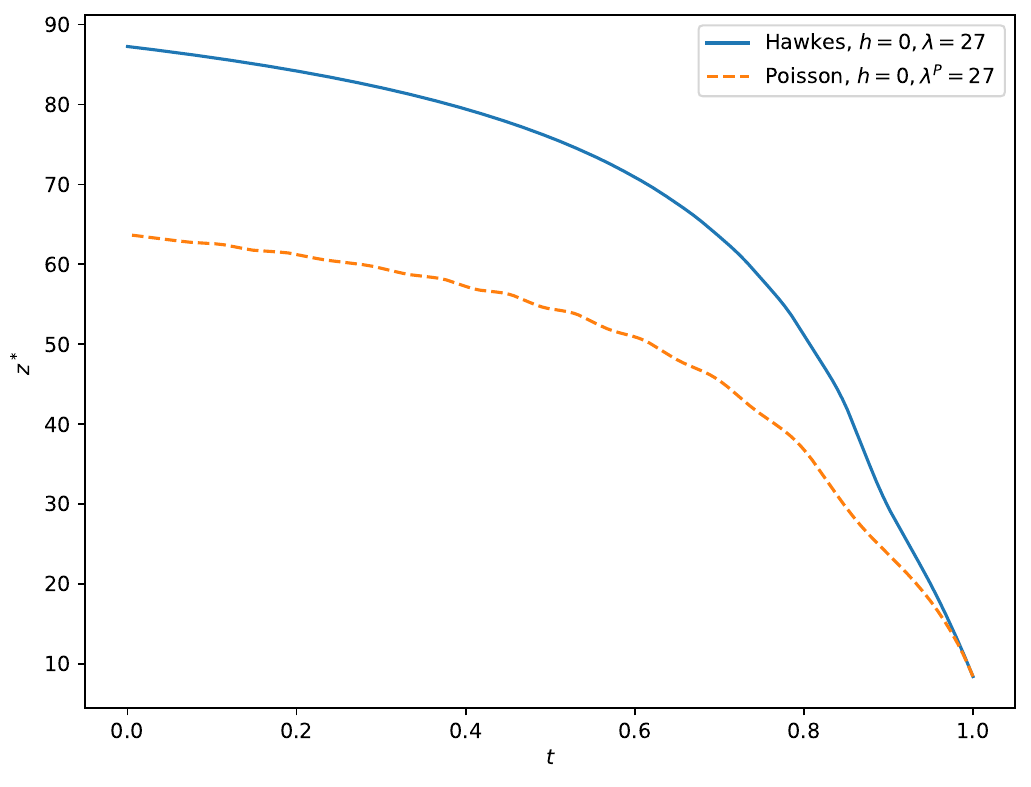}}
    \caption{Comparison with a Poisson model with constant intensity $\lambda^P_b=27$.}
    \label{fig: hawkes poisson baseline}
\end{figure*}

\begin{figure*}
    \centering
    \subfloat[Value functions $V(t, \lambda^P_e, h)$ and $V^P_e(t,h)$. \label{fig:vf exp lambda fixed}]{\includegraphics[scale=0.5]{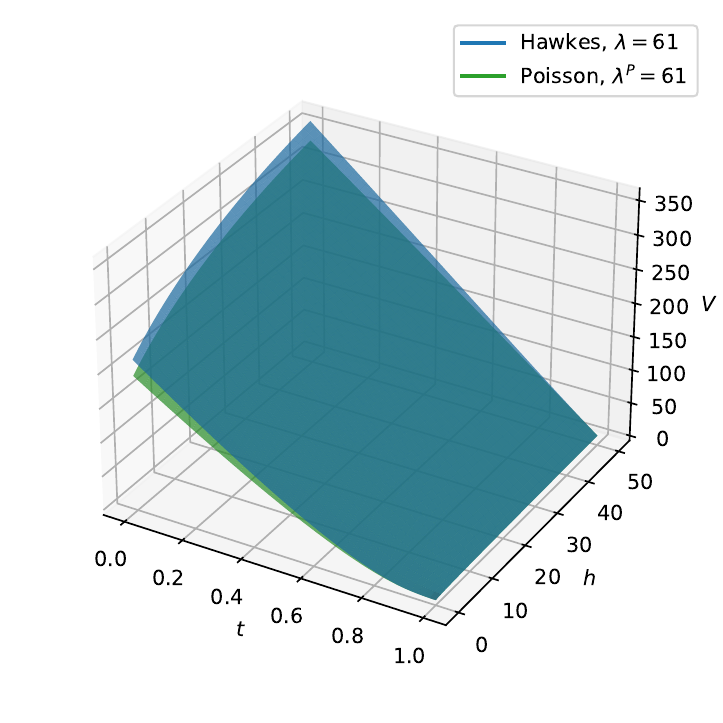}}
    \subfloat[Value functions $V(0, \lambda, h)$ and $V^P_e(0,h)$. \label{fig:vf exp t fixed}]{\includegraphics[scale=0.5]{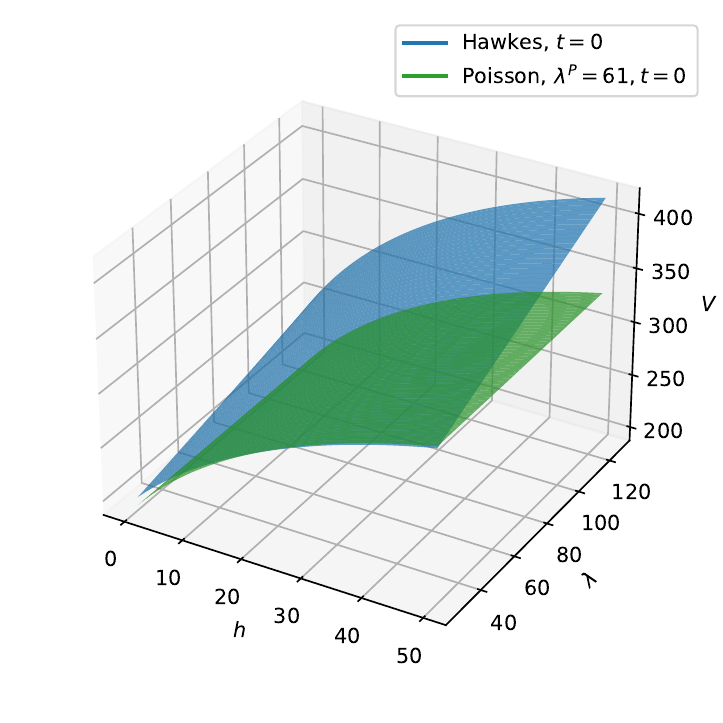}}
    \subfloat[Value functions $V(t, \lambda^P_e, 0)$ and $V^P_e(t,0)$. \label{fig:vf exp lambda h fixed}]{\includegraphics[scale=0.35]{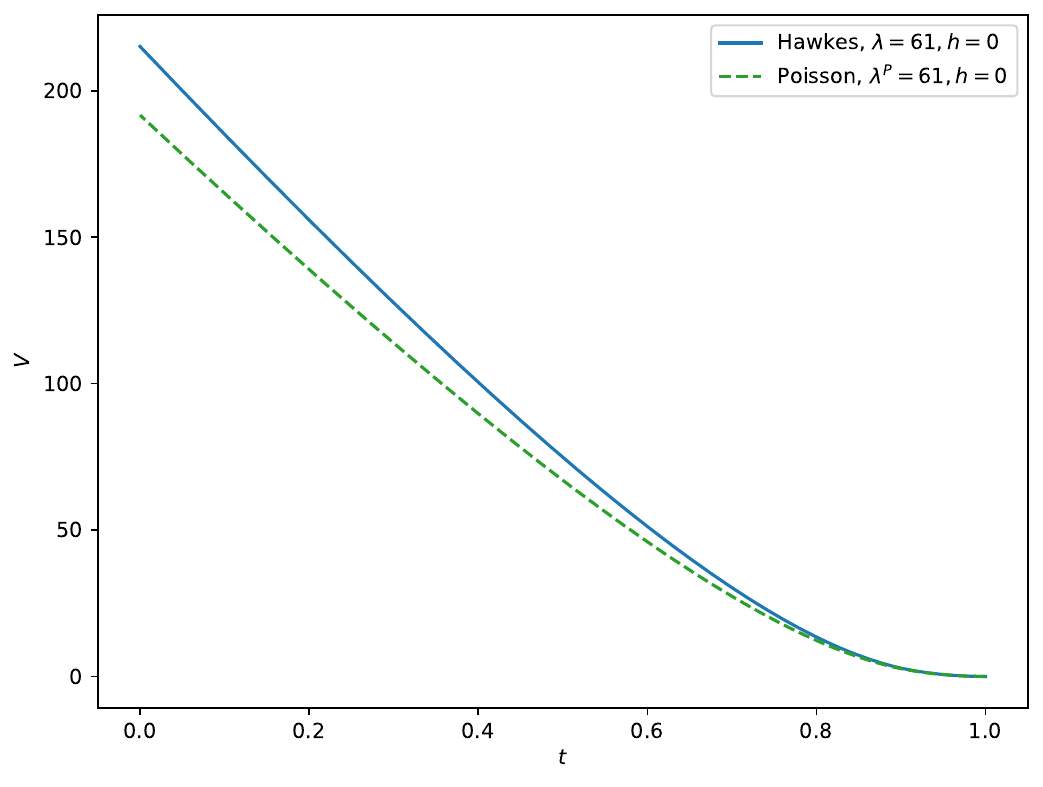}}\\
    \vspace{-7pt}
    \centering
    \subfloat[Optimal controls $z_t^*( \lambda_e^P, h)$ and $z_t^{P^*,e}(h)$. \label{fig:control exp lambda fixed}]{\includegraphics[scale=0.5]{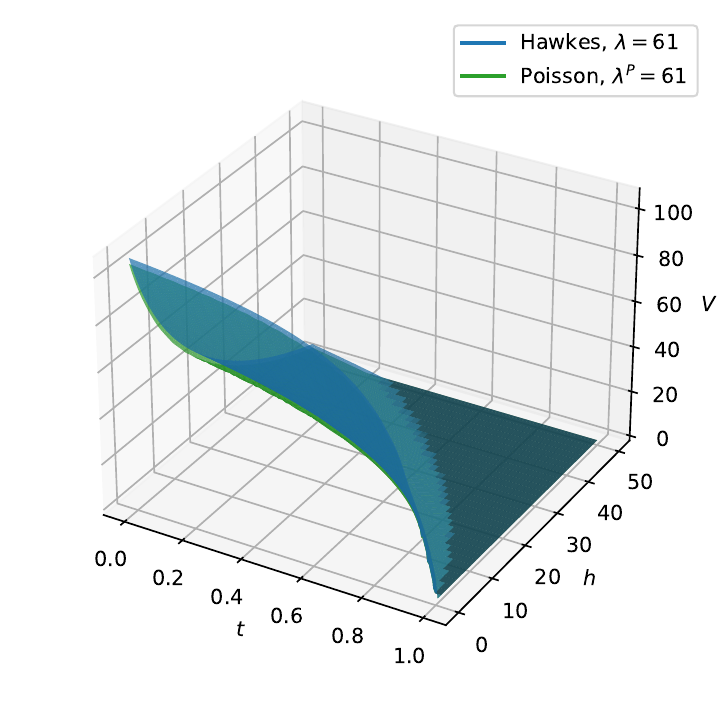}}
    \subfloat[Optimal controls $z_0^*( \lambda, h)$ and $z_0^{P^*,e}(h)$. \label{fig:control exp t fixed}]{\includegraphics[scale=0.5]{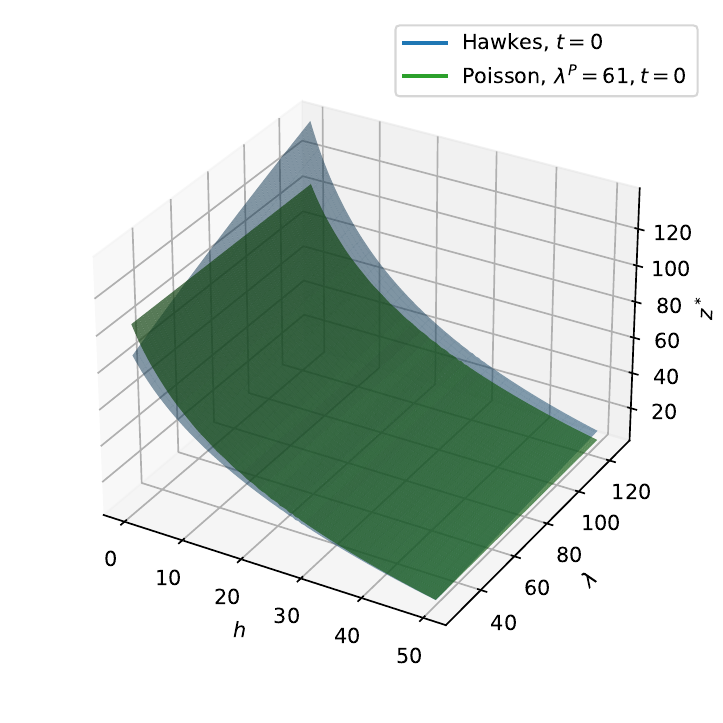}}
    \subfloat[Optimal controls $z_t^*( \lambda_e^P, 0)$ and and $z_t^{P^*,e}(0)$. \label{fig:control exp lambda h fixed}]{\includegraphics[scale=0.35]{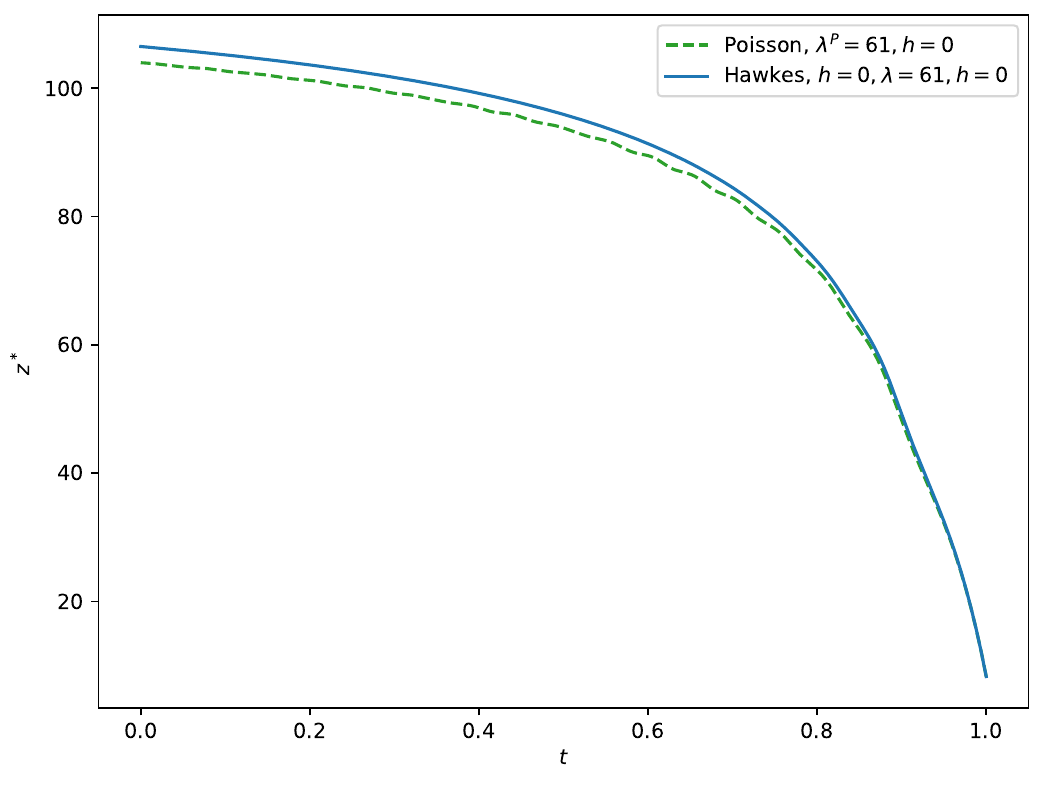}}
    \caption{Comparison with a Poisson model with constant intensity $\lambda^P_e= 61$.}
    \label{fig: hawkes poisson exp}
\end{figure*}

Figure \ref{fig: hawkes poisson exp} reports the comparison with the Poisson model (ii) with intensity $\lambda^P_e$. Panels \ref{fig:vf exp lambda fixed} and \ref{fig:control exp lambda fixed} show that the benefit of cybersecurity investment and the optimal investment rate are slightly greater in the presence of clustered attacks (Hawkes-based model). This finding is confirmed in Panels \ref{fig:vf exp lambda h fixed} and \ref{fig:control exp lambda h fixed}, which compare the value functions and optimal investment rates for fixed values $\lambda=\lambda^P_e$ and $h=0$.
Further insight is provided by panels \ref{fig:vf exp t fixed} and \ref{fig:control exp t fixed}, which display respectively the value functions and the optimal investment rates at the initial time $t=0$, across varying intensity levels. We can observe that the difference between the Hawkes and the Poisson models is negligible for small values of $\lambda$, while it becomes increasingly pronounced at higher values of $\lambda$. Interestingly, panel \ref{fig:control exp t fixed} shows that the optimal investment rate under the Hawkes model may be either higher or lower than that in the Poisson model, depending on whether the current intensity $\lambda$ exceeds $\lambda^P_e$ or not. This feature will be analyzed in more detail in Section \ref{Optimal control along a trajectory} below.
Overall, these findings indicate that even when the average attack intensity is correctly estimated, neglecting the temporal clustering of cyberattacks can lead to suboptimal cybersecurity investment decisions.

\subsubsection{Relative Gain} \label{Gain computation} 

Proceeding similarly to Section \ref{Comparison with original Gordon-Loeb}, we now evaluate the additional benefit derived from implementing the optimal adaptive policy $z^*$, as defined in \eqref{optimal control}, relative to the dynamic but deterministic policy $z^{P*}$ derived in a Poisson-based model. We assume that the underlying model is the one introduced in Section \ref{A dynamic extension} and compute the value function $V$ given in \eqref{value function} via Algorithm \ref{algorithm pide}, using the standard parameter set described in Section \ref{sec:parameters}.
When employing the deterministic strategy $z^{P*}$, the expected net benefit from cybersecurity investment is quantified as follows:
\begin{align*}
J(t, \lambda, h; z^{P*}) & = \int_t^T \left(\bar{\eta} (v-S(H_s^{t,h; z^{P*}},v))\mathbb{E}[\lambda_s^{t,\lambda}]\right)\rmd s \\
&\; - \int_t^T\left(z^{P*}_s+\frac{ \gamma}{2}(z^{P*}_s)^2 \right) \rmd s+ U\bigl(H_T^{t,h;z^{P*}}\bigr),
\end{align*}
where $H^{t,h; z^{P*}}_s$ is defined as in \eqref{h markovian} with $z=z^{P*}$ and $\mathbb{E}[\lambda_s^{t,\lambda}]$ can be computed explicitly by Proposition \ref{expectation lambda N}. Once the PDE \eqref{eq poisson problem} is numerically solved, $z^{P*}$ can be computed via Algorithm \ref{algorithm control intensity trajectory} taking $\lambda$ constant. Our numerical implementation of the Poisson-based model adopts the following specification:

\begin{table}[h!]
    \centering
\begin{tabular}{|c|c|c|c|c|c|c|c|c|c}
 \hline
     $h_{\text{min}}$ & $h_{\text{max}}$ & $\Delta h$ & $\lambda_{\text{min}}$ & $\lambda_{\text{max}}$ & $\Delta \lambda$ & $t_{\text{init}}$ & $H_{\text{init}}$ & $\lambda_t(\omega)$\\
    \hline
    $0$ & $50$ & $0.5$ & $27$ & $216$ & $1$ & $t$ & $h$ & $\lambda^P$ \\
    \hline
\end{tabular}
\caption{Meta-parameters for Algorithm \ref{algorithm control intensity trajectory}.}
\label{tab: algo 2 params gain}
\end{table}

The gain of the optimal investment policy $z^*$ with respect to $z^{P*}$ is computed as follows, in analogy to \eqref{gl gain percentage}:
\begin{equation} \label{poi gain percentage}
\%\text{gain}^P(t,\lambda,h) := 100\times\frac{V(t,\lambda,h) -J(t, \lambda, h; z^{P*})}{J(t, \lambda, h; z^{P*})}.
\end{equation}
Figure \ref{fig: Poi gain} reports the quantity $\%\text{gain}^P(t,\lambda,h)$ comparing the Hawkes-based model against two Poisson-based benchmarks with intensities $\lambda^P_b$ and $\lambda^P_e$, as considered above. 
For $h=0$, the gain increases with $\lambda$, ranging between  $7.6\%$ and $11.4\%$ for $\lambda^P_b$, and between $0.04\%$ and $0.6\%$ for $\lambda^P_e$. For $h=20$, the gain becomes nearly constant in $\lambda$. The fact that the gain for $\lambda^P_e$ is limited can be explained by the fact that the objective functional \eqref{linear criterion u z} is linear with respect to the losses and $\lambda^P_e$ is chosen in such a way that the Poisson-based model generates the same expected losses of our Hawkes-based model.

\begin{figure} 
   \includegraphics[scale=0.6]{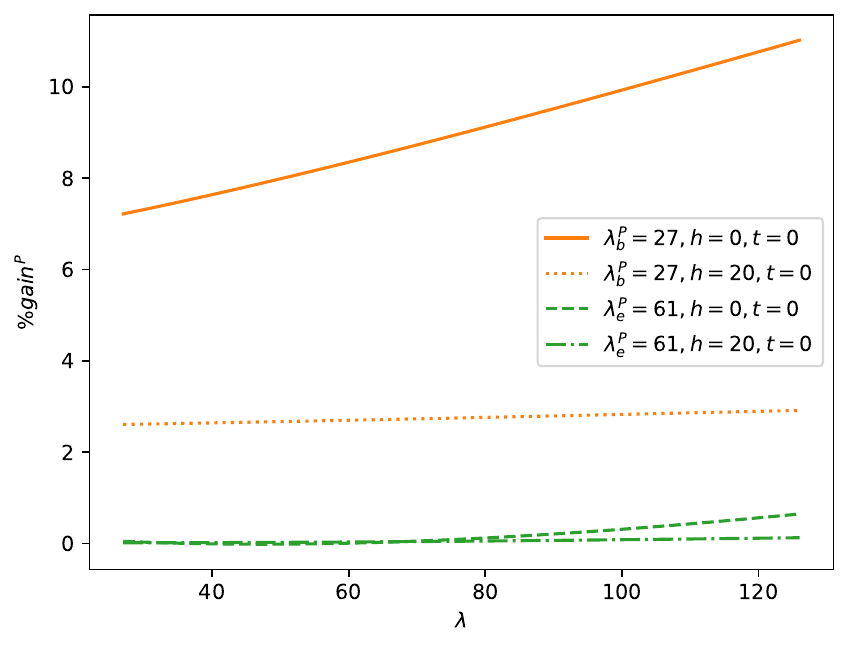}
    \caption{Relative gain with respect to the Poisson deterministic strategy, as defined in \eqref{poi gain percentage}.}
   \label{fig: Poi gain}
\end{figure}

\subsubsection{Adaptive Dynamics of the Optimal Investment Policy} \label{Optimal control along a trajectory}

Finally, we illustrate the adaptive behavior of the optimal investment policy given in equation \eqref{optimal control}. While the overall improvement over a Poisson-based strategy may appear limited in terms of overall gain (see Figure \ref{fig: Poi gain}), the key strength of our approach lies in its capacity to dynamically adjust the cybersecurity investment in response to the arrival of cyberattacks.
To this effect, panels \ref{fig:int traj 1} and \ref{fig:int traj 2} display two simulated paths of the Hawkes intensity $(\lambda_t)_{t\in[0,T]}$, alongside the constant intensities $\lambda^P_b$ and $\lambda^P_e$ defined in Section \ref{Comparison with Poisson}. 
The corresponding optimal investment policies are shown in panels \ref{fig:control traj 1} and \ref{fig:control traj 2}. Consistent with the analysis in Section \ref{value_fcts_Poisson}, the optimal investment rate $z^*_t$ is always larger than the Poisson-based benchmark $z^{P*}_b$, due to the fact that $\lambda_t\geq\lambda^P_b$, for all $t\in[0,T]$.
In contrast, the comparison with the benchmark strategy $z^{P*}_e$ is more nuanced. We highlight in cyan the time intervals during which $\lambda_t > \lambda^P_e$. Our simulations reveal that when $\lambda_t \leq \lambda^P_e$, the adaptive strategy $z^*_t$ closely aligns with $z^{P*}_e$. However, when $\lambda_t > \lambda^P_e$, especially during extended periods resulting from clusters of cyberattacks, the investment rate $z^*_t$ increases markedly, exceeding the corresponding deterministic strategy.
This shows that the optimal cybersecurity investment policy $z^*_t$ can react in real-time to rapid sequences of cyberattacks.
Finally, under the standard parameter set, the investment rate naturally declines toward the end of the planning horizon $[0,T]$, as the accumulated cybersecurity level suffices to mitigate future risk.

\begin{figure*}
    \centering
    \subfloat[Intensity path.\label{fig:int traj 1}]{\includegraphics[scale=0.5]{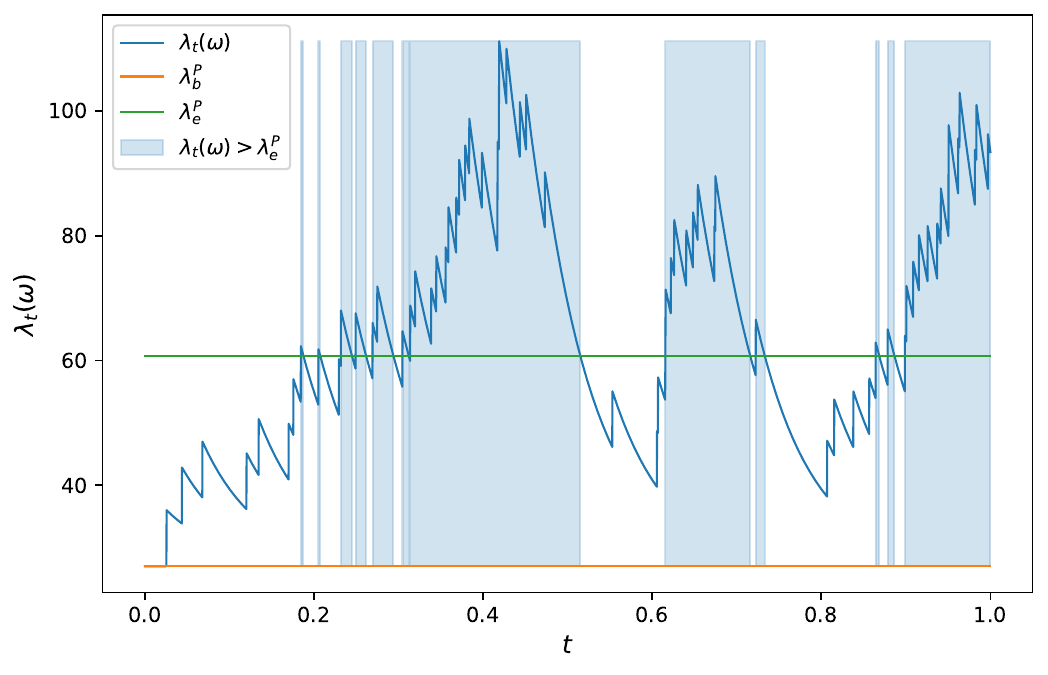}}
    \subfloat[Optimal strategies along the simulated intensity path.\label{fig:control traj 1}]{\includegraphics[scale=0.5]{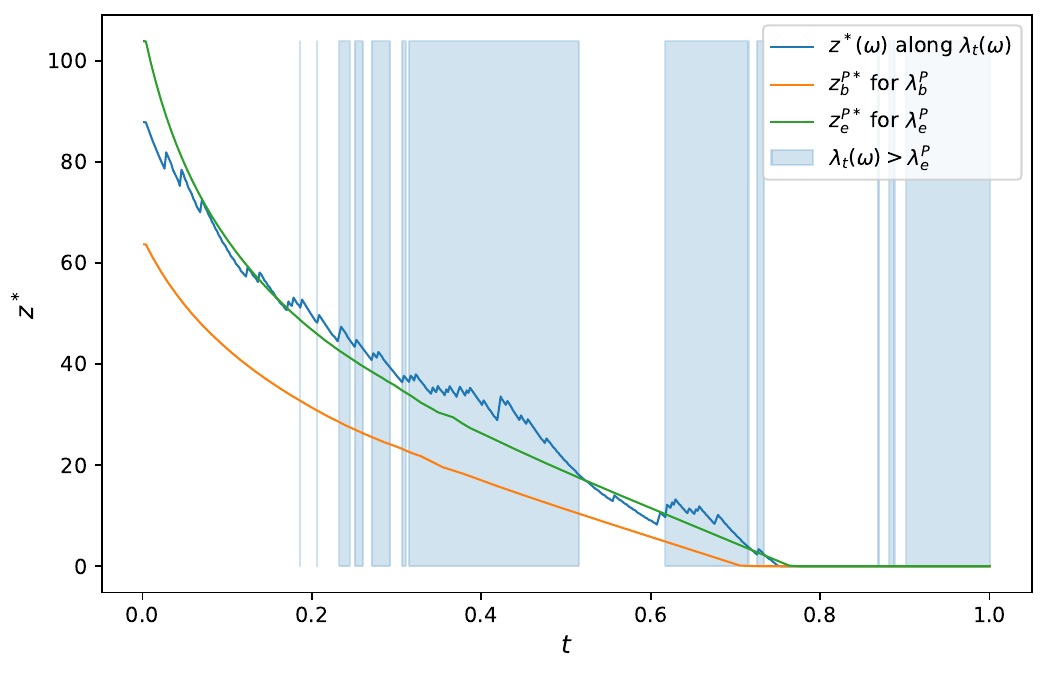}}\\
\subfloat[Intensity path.\label{fig:int traj 2}]{\includegraphics[scale=0.5]{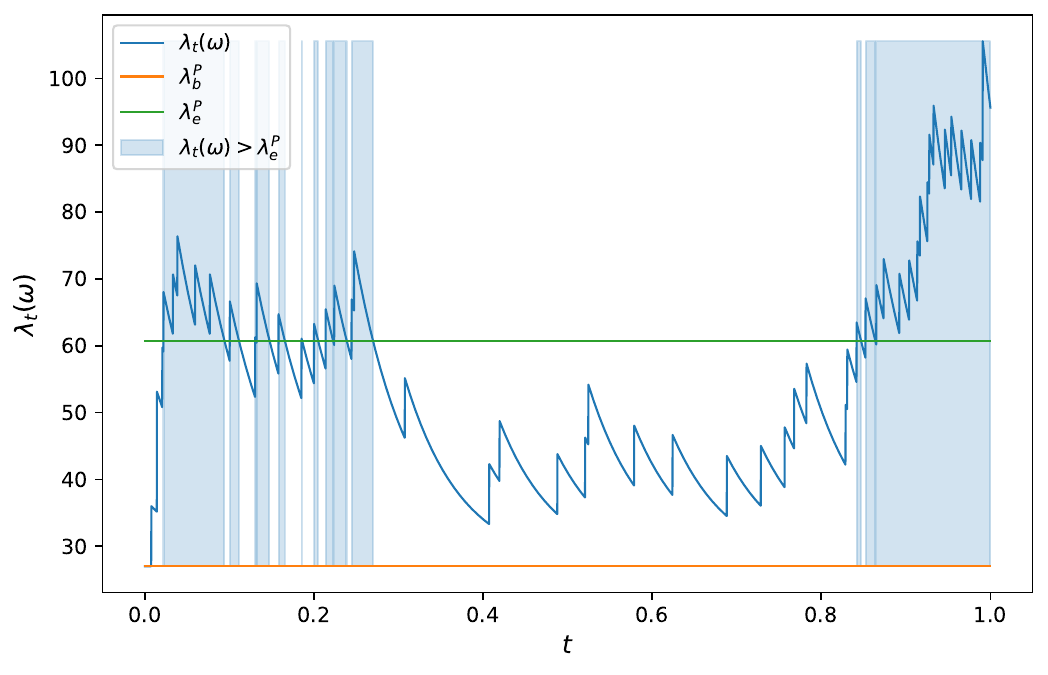}}
    \subfloat[Optimal strategies along the simulated intensity path.\label{fig:control traj 2}]{\includegraphics[scale=0.5]{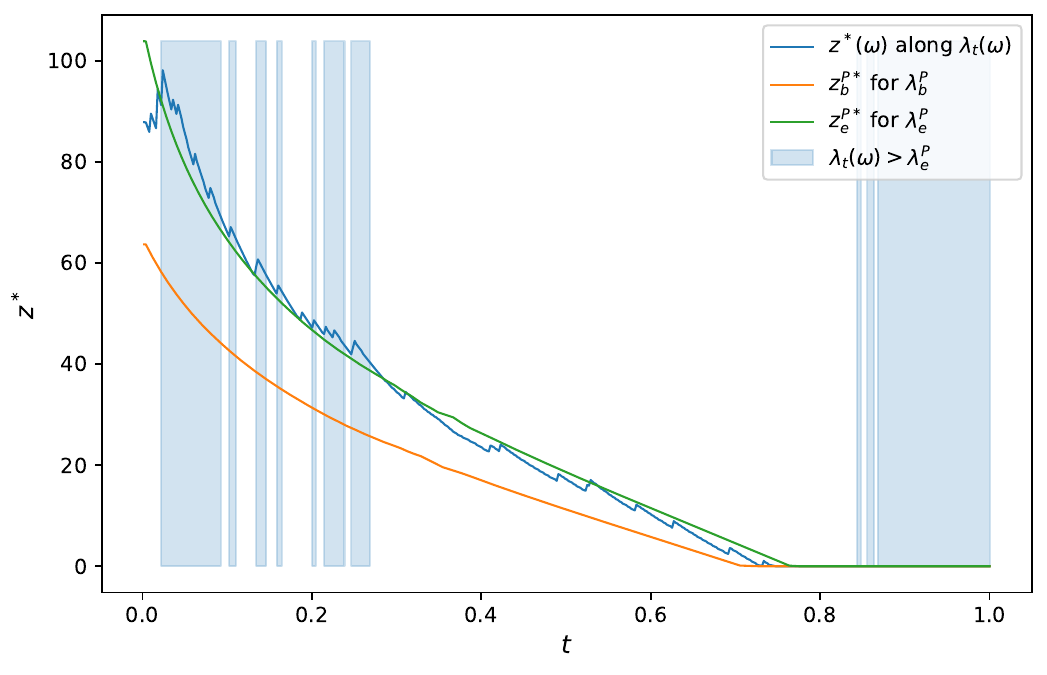}}
    \caption{Optimal strategies along simulated intensity paths.} 
\end{figure*}

\section{Implications for Cyber-insurance}
\label{InsurancePart}

In this section, we connect our dynamic model for optimal cybersecurity investment to actuarial aspects of cyber-insurance. From an insurance viewpoint, the cybersecurity investment policy acts as a form of self-insurance (see \cite{bohme2010modeling}): by investing in cybersecurity, the decision maker changes the distribution of aggregate losses over the planning horizon. Cybersecurity investment is therefore a form of prevention, which represents a crucial aspect of cyber-insurance, as discussed in Section \ref{sec:prevention} below. In turn, prevention measures such as cybersecurity investment have an impact on the determination of cyber-insurance premia, as shown in Section \ref{sec:premia} below.

\subsection{The Role of Prevention in Cyber-insurance}\label{sec:prevention}

Cyber-insurance contracts increasingly go beyond the pure compensation of financial losses and often include prevention requirements and bundled cyber-risk mitigation services. In this sense, cyber-insurance combines risk transfer with risk mitigation services and incentives, promoting best practices in cyber-risk management (see \cite{lopez2025cyber} for a recent analysis).

In our model, prevention is represented by a cybersecurity investment policy $z\in\mathcal{Z}$ and has a precise actuarial meaning: it changes the aggregate losses $L_T^z$ over the planning horizon by reducing the breach probability at attack times from the baseline $v$ to $S(H_{\tau_i}^z, v)$. 
Note that, by lowering the breach probability at attack times, in our model prevention does not only reduce the expected loss, but also reduces loss variability.
As a result, any underwriting or pricing rule that depends on the level of risk should, in principle, depend on the prevention policy (or on observable proxies of it, as discussed below). 
Our framework allows quantifying the effect of optimal prevention, which is an essential input for the insurer in order to determine cybersecurity requirements, incentive schemes, and premium differentiation rules across policyholders (see Section \ref{sec:premia}).

A key practical issue is that prevention is only partially observable, creating asymmetric information and moral hazard. Two implementation channels discussed in the literature are consistent with our framework. 
First, prevention may be delegated to the insurer or to an IT security provider in partnership with the insurer, which is particularly relevant for small firms or policyholders with limited internal expertise. This scheme makes prevention partly contractible and allows joint pricing of risk transfer together with mitigation services (see \cite{zeller2023risk}).
Second, when prevention is handled internally by the entity (as a self-protection measure), insurers may rely on incentive-based contracting. One approach is to condition the premium on minimum protection measures, combined with audits and penalties if requirements are not met (see \cite{hofmann2007internalizing, lelarge2009economic}). 
Another approach is to rely on principal–agent incentive schemes, as considered in \cite{brachetta2025optimal, mastrolia2025agency}, linking part of the compensation to measurable reduction in risk or losses. 
In both cases, the insurer needs quantitative tools to assess the impact of prevention measures, as considered in our study.

\subsection{Implications for Premium Determination} \label{sec:premia}

To quantify the impact of prevention on cyber-insurance premia, we apply the standard deviation principle (see, e.g., \cite{schmidli2017risk}), for an arbitrary cybersecurity investment policy $z\in\mathcal{Z}$:
\begin{equation}\label{eq:insurancepremium}
\pi(L^z_T):= \mathbb{E}[L^z_T] + \theta \,\sigma (L^z_T),
\end{equation}
where $\mathbb{E}[L_T^z]$ is the pure premium and $\theta \sigma(L_T^z)$ is a loading proportional to the standard deviation $\sigma(L^z_T)$ of $L^z_T$, with $\theta$ typically in  $[0.1,0.5]$. This principle is stable with respect to the monetary
unit and captures the impact of prevention on both the expected loss and the dispersion of losses.

We first compare the expected aggregate loss without additional investment, $\mathbb{E}[L_T^0]$, with the expected aggregate loss
under the optimal prevention policy $z^\ast$, denoted $\mathbb{E}[L_T^{z^\ast}]$, corresponding to \eqref{eq:insurancepremium} with $z=z^*$ and $\theta=0$. We compute $\mathbb{E}[L_T^0]$ by
Proposition \ref{prop:expectations}, while $\mathbb{E}[L_T^{z^\ast}]$ is estimated by Monte Carlo over $100{,}000$
samples, where $z^\ast$ is computed via Algorithm \ref{algorithm control intensity trajectory}
starting from $h=0$. Under the standard parameter set, we obtain (in k\$)
\[
    \mathbb{E}[L_T^0]=394.98
    \qquad\text{and}\qquad 
    \mathbb{E}[L_T^{z^\ast}]=141.77,
\]
corresponding to a reduction of approximately $65\%$ in the pure premium under the optimal dynamic prevention strategy.

The variance of the losses $L^z_T$ associated to a given cybersecurity investment policy $z\in\mathcal{Z}$ is given in the following lemma. We denote $\bar{\eta}:=\mathbb{E}[\eta_i]$ and $\sigma^2_{\eta}:=\mathrm{Var}(\eta_i)$, for $i\in\mathbb{N}^*$.

\begin{lemma}\label{lemma:variance}
Suppose that Assumptions \ref{assB} and \ref{assC} hold. Assume furthermore that $\eta_i\in L^2(\mathbb{P})$, for all $i\in\mathbb{N}^*$. Then, for every $z\in\mathcal{Z}$, it holds that
\begin{align*}
& \mathrm{Var}(L^z_T)   \\
&\;= \mathbb{E}\left[\int_0^T\Bigl(\sigma^2_{\eta}\,S(H^z_t,v)+\bar{\eta}^2S(H^z_t,v)\bigl(1-S(H^z_t,v)\bigr)\Bigr)\lambda_t \rmd t\right]    \\
&\quad + \bar{\eta}^2\,\mathrm{Var}\left(\int_0^TS(H^z_t,v)\,\rmd N_t\right).
\end{align*}
\end{lemma}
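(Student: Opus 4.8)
The plan is to compute the variance by conditioning on $\mathcal{F}_T$, exploiting the tower property $\mathrm{Var}(L^z_T)=\mathbb{E}[\mathrm{Var}(L^z_T\mid\mathcal{F}_T)]+\mathrm{Var}(\mathbb{E}[L^z_T\mid\mathcal{F}_T])$. This decomposition separates the randomness coming from the Bernoulli breach indicators and the loss sizes (captured by the conditional variance, given the attack times) from the randomness coming from the attack-arrival process itself (captured by the variance of the conditional mean). The key simplification is that, conditionally on $\mathcal{F}_T$, the attack times $(\tau_i)$ and hence the levels $H^z_{\tau_i}$ are fixed, and by Assumption \ref{assC} the terms $\eta_i B_i^{S(H^z_{\tau_i},v)}$ are conditionally independent across $i$.

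\medskip
\noindent First I would compute the conditional expectation. Writing $S_i:=S(H^z_{\tau_i},v)$, Assumption \ref{assC} together with Assumption \ref{assB} gives $\mathbb{E}[\eta_i B_i^{S_i}\mid\mathcal{F}_T]=\bar{\eta}\,S_i$, so that $\mathbb{E}[L^z_T\mid\mathcal{F}_T]=\bar{\eta}\sum_{i=1}^{N_T}S_i=\bar{\eta}\int_0^T S(H^z_t,v)\,\rmd N_t$, recovering the same rewriting used in the proof of Proposition \ref{prop:expectations}. Its variance is precisely the term $\bar{\eta}^2\,\mathrm{Var}\bigl(\int_0^T S(H^z_t,v)\,\rmd N_t\bigr)$ appearing in the statement. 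Next I would compute the conditional variance. By conditional independence of the summands given $\mathcal{F}_T$,
\[
\mathrm{Var}(L^z_T\mid\mathcal{F}_T)=\sum_{i=1}^{N_T}\mathrm{Var}\bigl(\eta_i B_i^{S_i}\mid\mathcal{F}_T\bigr).
\]
For a single summand, using that $\eta_i$ and $B_i^{S_i}$ are conditionally independent with $B_i^{S_i}$ Bernoulli of parameter $S_i$, a direct computation of $\mathbb{E}[(\eta_i B_i^{S_i})^2\mid\mathcal{F}_T]-(\bar{\eta}S_i)^2=(\sigma_\eta^2+\bar\eta^2)S_i-\bar\eta^2 S_i^2$ yields $\sigma^2_\eta S_i+\bar\eta^2 S_i(1-S_i)$.

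\medskip
\noindent Taking expectations then turns the sum $\sum_{i=1}^{N_T}$ into an integral against the counting measure $\rmd N_t$, and the definition of the stochastic intensity (exactly as in the last step of the proof of Proposition \ref{prop:expectations}, using continuity of $H^z$ so that $H^z_{\tau_i}=H^z_{\tau_i-}$) replaces $\rmd N_t$ by $\lambda_t\,\rmd t$ under the expectation, producing the first line of the claimed formula. Combining the two contributions via the conditional-variance decomposition gives the result.

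\medskip
\noindent The main obstacle is the careful justification of the conditional independence and the exchange of summation/integration with expectation. Specifically, one must verify that Assumption \ref{assC} delivers genuine conditional independence of the full family $\{\eta_i B_i^{S_i}\}_i$ given $\mathcal{F}_T$ (the assumption is stated pairwise for each $i$, so I would argue that the $\eta_i$ are i.i.d.\ and independent of $(B_i^{S_i})_i$ and of $N$, which together yield the required joint conditional independence). A further technical point is integrability: the $L^2$ assumption on $\eta_i$ ensures the conditional second moments are finite, and one needs $\mathbb{E}[N_T]<\infty$ (guaranteed by $\beta<\xi$ via Proposition \ref{expectation lambda N}) together with boundedness of $S$ to justify applying Fubini and the intensity characterization termwise.
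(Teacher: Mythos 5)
Your proposal is correct and follows essentially the same route as the paper's proof: the law of total variance conditional on $\mathcal{F}_T$, the conditional independence of the summands giving $\sigma^2_\eta S_i+\bar\eta^2 S_i(1-S_i)$ for each term, and the conversion of the sums into $\rmd N_t$-integrals and then $\lambda_t\,\rmd t$-integrals exactly as in Proposition \ref{prop:expectations}. Your closing remarks on joint conditional independence and integrability address points the paper leaves implicit, which is a welcome addition rather than a deviation.
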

\begin{proof}
Let $L_T^z$ be defined as in \eqref{actual losses with security}, where $z = (z_t)_{t \in [0,T]}$ is an admissible cybersecurity investment process. 
By the law of total variance (see, e.g., \cite[Problem 34.10]{Billingsley}), we can compute
\begin{equation}\label{eq:variance}\begin{aligned}
    \mathrm{Var}(L_T^z) 
    & = \mathbb{E}\left[\mathrm{Var}\left(\sum_{i=1}^{N_T}  \eta_i B_i^{S(H^z_{\tau_i},v)}\Bigg|\mathcal{F}_T\right)\right] \\
    &\quad 
    + \mathrm{Var}\left(\mathbb{E}\left[\sum_{i=1}^{N_T}  \eta_i B_i^{S(H^z_{\tau_i},v)}\Bigg|\mathcal{F}_T\right]\right)
\end{aligned}\end{equation}
For the first term on the right-hand side in \eqref{eq:variance} it holds that
\begin{align*}
& \mathrm{Var}\left(\sum_{i=1}^{N_T}  \eta_i B_i^{S(H^z_{\tau_i},v)}\Bigg|\mathcal{F}_T\right)
= \sum_{i=1}^{N_T}  \mathrm{Var}\left(\eta_i B_i^{S(H^z_{\tau_i},v)}\Big|\mathcal{F}_T\right)   \\
& = \sum_{i=1}^{N_T}\Bigl(\sigma^2_{\eta}\,S(H^z_{\tau_i},v)+\bar{\eta}^2S(H^z_{\tau_i},v)\bigl(1-S(H^z_{\tau_i},v)\bigr)\Bigr),
\end{align*}
while for the second term we have that
\[
\mathbb{E}\left[\sum_{i=1}^{N_T}  \eta_i B_i^{S(H^z_{\tau_i},v)}\Bigg|\mathcal{F}_T\right]
= \bar{\eta}\sum_{i=1}^{N_T}S(H^z_{\tau_i},v).
\]
Therefore, we can write that
\begin{align*}
\mathrm{Var}(L^z_T)
&= \sigma^2_{\eta}\,\mathbb{E}\left[\sum_{i=1}^{N_T}S(H^z_{\tau_i},v)\right] \\
&\quad + \bar{\eta}^2\,\mathbb{E}\left[\sum_{i=1}^{N_T}S(H^z_{\tau_i},v)\bigl(1-S(H^z_{\tau_i},v)\bigr)\right] \\
&\quad
+ \bar{\eta}^2\,\mathrm{Var}\left(\sum_{i=1}^{N_T}S(H^z_{\tau_i},v)\right).
\end{align*}
Arguing as in the proof of Proposition \ref{prop:expectations} yields the formula for $\mathrm{Var}(L^z_T)$ in the statement of the lemma.
\end{proof}

In the case $z\equiv 0$ (hence $S(0,v)=v$), the formula of Lemma \ref{lemma:variance} simplifies to
\[
   \mathrm{Var}(L_T^0)
   =
   \mathbb{E}[N_T]\big(\sigma^2_{\eta}v + \bar{\eta}^2 v(1-v)\big)
   + \bar{\eta}^2 v^2 \mathrm{Var}(N_T).
\]
This expression can be made fully explicit by using the closed-form expressions for $\mathbb{E}[N_T]$ and $\mathrm{Var}(N_T)$ available for Hawkes processes (see \cite[Section~2.1]{hillairet2025explicit}). 
For the optimal policy $z^\ast$, the quantities appearing in Lemma \ref{lemma:variance} are computed by Monte Carlo over $100{,}000$ samples.

\begin{table}[h!]
    \centering
    \begin{tabular}{c|c|c|c|c}
        $\mathbb{E}[\eta]$ & $\text{Var}(\eta)$ & $\sigma(L_T^0)$ & $\sigma(L_T^{z^*})$ & $1 - [\sigma(L_T^{z^*})/\sigma(L_T^0)]$ \\
       \hline
        10 & 10 & 118.56 & 51.21 & 56.80\% \\
       10 & 50 & 125.04 & 56.58 & 54.75\% \\
       10 & 100 & 132.70 &  62.65 & 52.79\%
    \end{tabular}
    \vspace{1em}
    \caption{Standard deviations.}
    \label{tab: standard deviation}
\end{table}

\begin{table}[h!]
    \centering
    \begin{tabular}{c|c|c|c|c}
        $\mathbb{E}[\eta]$ & $\text{Var}(\eta)$ & $\pi(L_T^0)$ & $\pi (L_T^{z^*})$ & $1 - [\pi (L_T^{z^*})/\pi(L_T^0)]$ \\
       \hline
        10 & 10 & 430.55 & 157.13 & 63.50\% \\
       10 & 50 & 432.5 & 158.74 & 63.32\% \\
       10 & 100 & 434.79 &  160.57 & 63.07\%
    \end{tabular}
        \vspace{1em}
    \caption{Premia (std deviation principle).}
    \label{tab:premium}
\end{table}

Table~\ref{tab: standard deviation} reports the standard deviations of $L_T^0$ and $L_T^{z^\ast}$ for different values of
$\sigma^2_{\eta}$ (with $\bar{\eta}$ fixed), showing a reduction of roughly $53\%$--$57\%$ under optimal prevention.
Table~\ref{tab:premium} reports the corresponding premia under \eqref{eq:insurancepremium} with $\theta=0.3$. Across the
scenarios considered, the premium under optimal prevention is about $63\%$ lower than in the no-investment benchmark. These
results quantify how dynamic optimal cybersecurity investment translates into lower premia.

The values $\pi(L_T^{z^\ast})$ and $\pi(L_T^0)$ provide natural benchmark lower and upper premia corresponding to optimal
prevention and to the absence of additional cybersecurity investment, respectively. 
More generally, when policyholders are classified into prevention
classes using observable proxies (e.g., cybersecurity scores or audits), the model supports premium differentiation across classes: better prevention (corresponding to higher cybersecurity levels) implies lower risk, and therefore lower actuarial premia under standard premium principles. In this sense, the gap between $\pi(L_T^{0})$ and $\pi(L_T^{z^\ast})$ provides a concrete estimate of the premium range induced by prevention under the standard deviation principle.

\section{Conclusions} \label{Conclusions}

In this work, we introduce a dynamic and stochastic extension of the Gordon-Loeb model \cite{gordon2002economics} for optimal cybersecurity investment, incorporating temporally clustered cyberattacks via a Hawkes process. Our modeling framework captures the empirically observed phenomenon of attack bursts, thus offering a more realistic representation of the current cyber-risk environment. We formulate the cybersecurity investment decision problem as a two-dimensional stochastic optimal control problem, maximizing the expected net benefit of cybersecurity investments. We allow for adaptive investment policies that respond in real-time to the arrival of cyberattacks. 

Our numerical results demonstrate that the optimal cybersecurity investment policy consistently outperforms both static benchmarks and Poisson-based models that ignore clustering. In particular, even when Poisson models are calibrated to match the expected attack frequency, they fail to capture the implications of attack clustering on investment timing and magnitude, thus leading to suboptimal investment decisions. Our findings indicate that the optimal dynamic strategy is able to react promptly to attack clusters, offering substantial improvements in expected net benefit in high-risk scenarios.
From an actuarial perspective, we interpret cybersecurity investment as a form of self-insurance, showing that improved prevention lowers both expected losses and loss variability, which in turn reduces cyber-insurance premia under standard premium principles.
Overall, our results underscore the importance of accounting for dynamic and stochastic threat patterns in cybersecurity planning. The proposed framework supports risk managers, insurers and policymakers in designing responsive cybersecurity investment strategies and cyber-insurance pricing schemes that are consistent with the evolving cyber-risk landscape.

Future research directions include the empirical calibration to sector-specific cyber incident data, the consideration of risk-aversion with respect to losses resulting from cyberattacks, and the integration of cyber-insurance as a complementary tool for risk mitigation (see  \cite{awiszus2023modeling,Dou_et_al2020,mazzoccoli2020robustness,miaoui2019enterprise,ougut2011cyber,skeoch2022expanding,SkeochIoannidis24} for some recent studies in this direction). 
Our framework can also be applied from the viewpoint of an insurance firm which provides insurance against losses due to cyberattacks, thus laying the foundations for the development of Cram\'er-Lundberg-type models (see, e.g., \cite{Mikosch}) for cyber-insurance.
Finally, our modeling setup can also be extended to multivariate Hawkes processes (as considered in \cite{Embrechts_et_al2011}, or in the more general versions of \cite{Bielecki_et_al2022,Bielecki_et_al2023})  to differentiate among multiple types of cyberattacks (see \cite{bentley_et_al2020} for a multivariate generalization of the static Gordon-Loeb model).

\section*{Acknowledgements} \label{Acknowledgements} 
The authors are thankful to Alessandro Calvia, Salvatore Federico, Luca Grosset, Eleonora Losiouk, Daniele Marazzina, Andrea Pallavicini, Andrea Perchiazzo, Mario Putti, Edit Rroji for valuable discussions on the topic of this work.
Constructive comments by three anonymous reviewers and an associate editor are gratefully acknowledged.

\section*{Competing Interests}
All authors declare they do not have competing interests.

\section*{Data Availability Statement}
Data availability is not applicable to this article as no new data were created or analysed in this study. If needed, the authors agree to make available the code used for the numerical results illustrated in Section 5.

\section*{Funding} 
This work was supported by the Europlace Institute of Finance; Fondazione Gini; Unione Matematica Italiana and European Union-Next Generation EU [PRIN-2022BE-MMLZ]; and the Chair Stress Test, RISK Management and Financial
Steering, led by the French Ecole Polytechnique and its Foundation and sponsored by BNP Paribas.
The funders had no role in study design, decision to publish, or preparation of the manuscript.
\bibliographystyle{IEEEtran}
\bibliography{refs}

\end{document}